\tikzset{elliptic state/.style={draw,ellipse}}
\theoremstyle{plain} 
\newtheorem{theorem}{Theorem}
\newtheorem{lemma}{Lemma}
\newtheorem{corollary}{Corollary}
\theoremstyle{definition}
\newtheorem{definition}{Definition}
\newtheorem{example}{Example}
\theoremstyle{remark}
\newcommand\xqed[1]{%
  \leavevmode\unskip\penalty9999 \hbox{}\nobreak\hfill
  \quad\hbox{#1}}
\newcommand\tri{\xqed{$\triangle$}}
\DeclareMathOperator{\BIPa}{\operatorname{\sf bip}_1}
\DeclareMathOperator{\Reoa}{\operatorname{\sf reo}_1}
\DeclareMathOperator{\BIPb}{\operatorname{\sf bip}_2}
\DeclareMathOperator{\Reob}{\operatorname{\sf reo}_2}
\DeclareMathOperator{\fa}{\operatorname{\sf f}_1}
\DeclareMathOperator{\ga}{\operatorname{\sf g}_1}
\DeclareMathOperator{\fb}{\operatorname{\sf f}_2}
\DeclareMathOperator{\gb}{\operatorname{\sf g}_2}
\newcommand{\PA}{\mathrm{PA}}
\newcommand{\Arch}{\mathrm{Arch}}
\newcommand{\LTS}{\mathrm{LTS}}
\newcommand{\CA}{\mathrm{CA}^\pm}
\newcommand{\BC}{\mathrm{IM}}
\newcommand{\mdash}[1][]{---#1}
\newcommand{\ie}[1][\ ]{i.e.{#1}}
\newcommand{\defn}[1]{Definition~\ref{defn:#1}}
\newcommand{\fig}[2][]{Figure~\ref{fig:#2}\ensuremath{#1}}
\newcommand{\tab}[1]{Table~\ref{tab:#1}}
\newcommand{\eq}[1]{(\ref{eqn:#1})}
\newcommand{\ex}[1]{Example~\ref{ex:#1}}
\newcommand{\secn}[1]{Section~\ref{sec:#1}}
\newcommand{\lem}[1]{Lemma~\ref{lem:#1}}
\newcommand{\cor}[1]{Corollary~\ref{cor:#1}}
\newcommand{\thm}[1]{Theorem~\ref{thm:#1}}
\newcommand{\bydef}[1]{\ensuremath{\stackrel{\Delta}{#1}}}
\newcommand{\oftype}{\ensuremath{\!:\!}}
\newcommand{\up}{\ensuremath{up}}
\newcommand{\down}{\ensuremath{dn}}
\newcommand{\dtspliti}{\ensuremath{.}}
\newcommand{\dtsplitii}{\ensuremath{:\ }}
\newcommand{\dtsplit}{\ensuremath{\,/\!/\,}}
\newcommand{\connector}[4]
 {\ensuremath{(#1)\dtspliti{}[{#2}\dtsplitii{}{#3}\dtsplit{}{#4}]}}
\newcommand{\domD}[0]{\mathsf{D}}
\renewcommand{\vec}[1]{\mathbf{#1}}
\newcommand{\head}[1] {\ensuremath{\mathit top(#1)}}
\newcommand{\tail}[1] {\ensuremath{\mathit bot(#1)}}
\newcommand{\support}[1] {\ensuremath{\mathit supp(#1)}}
\newcommand{\cB}{\ensuremath{\mathcal{B}}}
\newcommand{\cC}{\ensuremath{\mathcal{C}}}
\title{Relating BIP and Reo}
\author{Kasper Dokter, Sung-Shik Jongmans, Farhad Arbab \\
\institute{Centrum Wiskunde \& Informatica, \\ Amsterdam, Netherlands}
\and
Simon Bliudze \\ 
\institute{\'Ecole Polytechnique F\'ed\'erale de Lausanne, \\ Lausanne, Switzerland}}
\begin{document}

\maketitle

\begin{abstract}
Coordination languages simplify design and development of concurrent systems. 
Particularly, exogenous coordination languages, like BIP and Reo, enable system designers to express the interactions among components in a system explicitly. 
In this paper we establish a formal relation between BI(P) (i.e., BIP without the priority layer) and Reo, by defining transformations between their semantic models. 
We show that these transformations preserve all properties expressible in a common semantics. 
This formal relation comprises the basis for a solid comparison and consolidation of the fundamental coordination concepts behind these two languages. 
Moreover, this basis offers translations that enable users of either language to benefit from the toolchains of the other.
\end{abstract}

\raggedbottom

\section{Introduction}
\label{sec:intro}

\paragraph{Context.} 
Over the past decades, architecture description languages (ADL) and coordination languages have emerged as fundamental tools for tackling complexity in the design of correct-by-construction componentised software systems \cite{Garlan}. 
However, no language has yet emerged as a de facto standard, and no consensus exists on how to properly design such languages, either.  
BIP~\cite{bip06,BliSif07-acp-emsoft} and Reo~\cite{Reo} each addresses this complexity and provides a formal semantic framework, which allows reasoning about and proving correctness of coordination as a first-class entity.  

BIP is a language for the construction of concurrent systems by superposing three layers: behaviour, interaction and priorities.
The layered approach of BIP separates concerns between interaction and computation.
This is essential for component-based design of concurrent systems, because it allows global analysis of the coordination layer and reusability of written code.

Reo is a language for compositional specification of coordination protocols, i.e., protocols modeling the synchronization and dataflow among multiple components.
These protocols consist of graph-like structures, called {\em connectors}.
Reo connectors may compose together to form more complex connectors, allowing reusability and compositional construction of coordination protocols.

We provide a more detailed introduction to BIP and Reo in \secn{overview}.

\paragraph{Motivation.} 
Both BIP and Reo advocate the necessity of separating coordination mechanisms from the coordinated components.
In BIP one refers to this separation as the {\em architecture-based} design approach \cite{BBJS14}. 
Reo literature uses the term {\em exogenous coordination} to describe the same fundamental principle \cite{PA01, Reo, Arbab11}. 
Despite this fundamental agreement, the design choices underlying BIP and Reo differ.
For example, BIP uses stateless interactions, while Reo allows stateful connectors.
Establishing a formal relation between BIP and Reo is necessary to discover fundamental principles that drive the design of coordination languages.

Translations exist between numerous other coordination models and BIP and Reo, individually \cite{CRB+08, BMM11, PC08, TSR11}. 
Hence, a formal relationship between BIP and Reo yields insight, albeit indirect, into the relation of each with a wider range of related work.

Furthermore, establishing a formal relationship between BIP and Reo enables encodings that allow each of the two frameworks to benefit from tools and theoretical results obtained for the other. 
These toolchains include tools for editing, code generation, and model checking.
We refer to \cite{biptools} and \cite{reotools,Arbab11} for details.

\paragraph{Contributions.} 
We relate the most important semantic models of BI(P)\footnote{\label{fn:prio}Although BIP's notion of priority is equally applicable to the constraint automata semantics of Reo, Reo provides no syntax to specify such global priority preferences. Reo does have a weaker priority mechanism to specify local preferences by means of context sensitive channel {\tt LossySync}, that prefers locally maximal dataflow.} (i.e., BIP without the priority layer) and Reo. 
For Reo we consider {\em port automata} and {\em constraint automata}, which model Reo connectors at different levels of abstraction \cite{JA12}. 
For BI(P) we consider {\em BIP architectures} \cite{ABBJS14} and {\em BIP interaction models}, i.e., sets of simple interaction expressions \cite{BBJS14}. 

First, we provide a short summary of BIP and Reo in \secn{overview}.
Then, in \secn{PA2Arch}, we define mappings between port automata and BIP architectures, and show that these distribute over composition modulo semantic equivalence.
Hence, it is possible to compute these translations incrementally, in order to speed them up. 
In \secn{CA2BIPconn}, we define mappings between stateless constraint automata and BIP interaction models.
We show that all transformations preserve all properties of observable dataflow, which, for example, enables one to transfer safety properties established for some generated code, or the results of model checking from one model to the other. 
These mappings in the data-sensitive domain do not distribute over composition, but in \secn{conclusion} we briefly discuss a different translation scheme that still allows incremental translation. There, we discuss also the differences and similarities between BI(P) and Reo and other coordination languages, and point out future work.

\paragraph{Related Work.}
Other authors have related and compared both BIP and Reo to other coordination languages. 
Bruni et al. encode BIP models into Petri nets \cite{BMM11}, and Chkouri et al. present a translation of AADL into BIP \cite{CRB+08}. 
Proen\c{c}a and Clarke provide a detailed comparison between Orc and Reo \cite{PC08}.
Arbab et al. provide a translation of Reo connectors into the Tile Model \cite{ABCLM09}.
Krause compared Reo to Petri nets \cite{Krause09}.
Talcott, Sirjani and Ren connect both ARC and PBRD to Reo by providing mappings between their semantic models \cite{TSR11}.

Although an indirect comparison of BIP and Reo through their respective comparisons with other models, e.g., Petri nets, is certainly possible, the direct and formal translations we present in this paper allows direct translation tools between BIP and Reo, that are otherwise difficult, if not impossible, to construct based on such indirect comparisons.

\section{Overview of BIP and Reo}
\label{sec:overview} 

\subsection{BIP}
\label{sec:bip}

A BIP system consist of a superposition of three layers: Behaviour, Interaction, and Priority.  
The behaviour layer encapsulates all computation, consisting of {\em atomic components} processing sequential code.
{\em Ports} form the interface of a component through which it interacts with other components.  
BIP represents these atomic components as \emph{Labeled Transition Systems} (LTS) having transitions labeled with ports and extended with data stored in local variables. 
The second layer defines component coordination by means of
\emph{BIP interaction models}~\cite{BBJS14}. 
For each \emph{interaction} among components in a BIP system, the interaction model of that system specifies the set of ports synchronized by that interaction and the way data is retrieved, filtered and updated in each of
the participating components.
In the third layer, priorities impose scheduling constraints to
resolve conflicts in case alternative interactions are possible. 
In the rest of this paper, we disregard priorities and focus mainly on
interaction models (cf., footnote \ref{fn:prio}).

\paragraph{Data-agnostic semantics.} 
We first introduce a data-agnostic semantics for BIP.

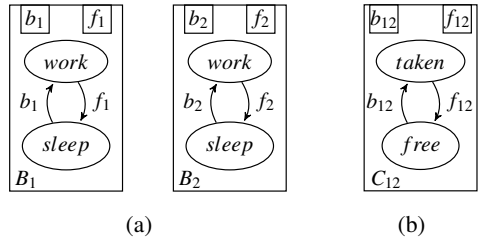
\begin{figure}[t]
\centering
\newcommand{\SCALE}{.75}
\subfigure[]{
\scalebox{\SCALE}{\begin{tikzpicture}[->,>=stealth',shorten >=1pt,auto,node distance=1.5cm, semithick]
  \tikzstyle{every state}=[fill=red,draw=none,text=white]

  \node[elliptic state]         (A)              {$sleep$};
  \node[elliptic state]         (B) [above of=A] {$work$};

  \path (A) edge [bend left]  node {$b_1$} (B);
  \path (B) edge [bend left]  node {$f_1$} (A);
  \draw (-1,-0.8) rectangle (1,2.5);
  \draw (0.3,2) rectangle (0.8,2.5);
  \draw (-0.8,2) rectangle (-0.3,2.5);
  \node at (-0.55,2.25) {$b_1$};
  \node at (0.55,2.25) {$f_1$};
  \node at (-0.7,-0.6) {$B_1$};
\end{tikzpicture}
\qquad
\begin{tikzpicture}[->,>=stealth',shorten >=1pt,auto,node distance=1.5cm, semithick]
  \tikzstyle{every state}=[fill=red,draw=none,text=white]

  \node[elliptic state] (A)              {$sleep$};
  \node[elliptic state]         (B) [above of=A] {$work$};

  \path (A) edge [bend left]  node {$b_2$} (B);
  \path (B) edge [bend left]  node {$f_2$} (A);
  \draw (-1,-0.8) rectangle (1,2.5);
  \draw (0.3,2) rectangle (0.8,2.5);
  \draw (-0.8,2) rectangle (-0.3,2.5);
  \node at (-0.55,2.25) {$b_2$};
  \node at (0.55,2.25) {$f_2$};
  \node at (-0.7,-0.6) {$B_2$};
  \label{fig:mutexa}
\end{tikzpicture}}}
\qquad
\subfigure[]{
\scalebox{\SCALE}{\begin{tikzpicture}[->,>=stealth',shorten >=1pt,auto,node distance=1.5cm, semithick]
  \tikzstyle{every state}=[fill=red,draw=none,text=white]

  \node[elliptic state] (A)              {$free$};
  \node[elliptic state]         (B) [above of=A] {$taken$};

  \path (A) edge [bend left,left]  node {$b_{12}$} (B);
  \path (B) edge [bend left,right]  node {$f_{12}$} (A);
  \draw (-1,-0.8) rectangle (1,2.5);
  \draw (0.4,2) rectangle (0.9,2.5);
  \draw (-0.9,2) rectangle (-0.4,2.5);
  \node at (-0.65,2.25) {$b_{12}$};
  \node at (0.65,2.25) {$f_{12}$};
  \node at (-0.6,-0.6) {$C_{12}$};
  \label{fig:mutexb}
\end{tikzpicture}}}

\caption{BIP components (a); coordinator (b).}
\label{fig:mutex}
\end{figure}

\begin{definition}[BIP component~\cite{ABBJS14}]
	\label{defn:bipcomp}
	A {\em BIP component} $C$ over a set of ports $P_C$ is a labeled transition system $(Q,q^0,P_C,\to)$ over the alphabet $2^{P_C}$. If $\calC$ is a set of components, we say that $\calC$ is \emph{disconnected} iff $P_C \cap P_{C'} = \emptyset$ for all distinct $C,C' \in \calC$. Furthermore, we define $P_\calC = \bigcup_{C \in \cC} P_C$.
\end{definition}

Then, BIP defines an \emph{interaction model} over a set of ports $P$ to be a set of subsets of $P$. 
Interaction models are used to define synchronisations among
components, which can be intuitively described as follows.  Given a
disconnected set of BIP components $\mathcal{C}$ and an interaction
model $\gamma$ over $P_\mathcal{C}$, the state space of the
corresponding {\em composite component} $\gamma(\mathcal{C})$ is the
cross product of the state spaces of the components in $\mathcal{C}$;
$\gamma(\mathcal{C})$ can make a transition labelled by an interaction
$N \in \gamma$ iff all the involved components (those that have ports
in $N$) can make the corresponding transitions.  A straightforward
formal presentation can be found in \cite{BliSif07-acp-emsoft} (cf., \defn{archapp} below).
Thus, BIP interaction models are \emph{stateless}: every interaction in $\gamma$ is always allowed; it is enabled if all ports in the interaction are ready.
However, \cite{ABBJS14} shows the need for statefull interaction, which motivates {\em BIP architectures}

\begin{definition}[BIP architecture~\cite{ABBJS14}]
\label{defn:archi}
A {\em BIP architecture} is a tuple $A = (\cC, P_A, \gamma)$, where $\cC$ is a finite disconnected set of {\em coordinating} BIP components, $P_A$ is a set of ports, such that $P_\calC = \bigcup_{C \in \cC} P_C \subseteq P_A$, and $\gamma \subseteq 2^{P_A}$ is a {\em data-agnostic interaction model}. We call ports in $P_A \setminus P_\calC$ {\em dangling ports} of $A$.
\end{definition}

Essentially, a BIP architecture is a structured way of combining an interaction model $\gamma$ with a set of distinguished components, whose only purpose is to control which interactions in $\gamma$ are applicable at which point in time (which depends on the states of the coordinating components).

\begin{definition}[BIP architecture application~\cite{ABBJS14}]
   \label{defn:archapp}
   Let $A = (\cC, P_A, \gamma)$ be a BIP architecture, and $\calB$ a set of components, such that $\calB \cup \calC$ is finite and disconnected, and that $P_A \subseteq P_\calB \cup P_\calC$. Write $\calB \cup \calC = \{B_i \mid i \in I\}$, with $B_i = (Q_i,q_i^0, P_i, \to_i)$. Then, the {\em application $A(\calB)$ of $A$ to $\calB$} is the BIP component $(\prod_{i \in I} Q_i, (q_i)_{i \in I}, P_\calB \cup P_\calC, \to)$, where $\to$ is the smallest relation satisfying: $(q_i)_{i \in I} \xrightarrow{N} (q_i')_{i \in I}$ whenever
\begin{enumerate}
	\item $N = \emptyset$, and there exists an $i\in I$ such that $q_i \xrightarrow{\emptyset}_i q_i'$ and $q_j'=q_j$ for all $j \in I \setminus \{i\}$; or
	\item $N \cap P_A \in \gamma$, and for all $i \in I$ we have $N \cap P_i \neq \emptyset$ implies $q_i \xrightarrow{N \cap P_i}_i q_i'$, and $N \cap P_i = \emptyset$ implies $q_i'=q_i$.
\end{enumerate}
\end{definition}

The application $A(\cB)$, of a BIP architecture $A$ to a set of BIP components $\cB$, enforces coordination constraints specified by that architecture on those components~\cite{ABBJS14}.
The \emph{interface} $P_A$ of $A$
contains all ports $P_\calC$ of the coordinating components $\cC$ and some
additional ports, which must belong to the components in $\cB$.  In
the application $A(\cB)$, the ports belonging to $P_A$ can
participate only in interactions defined by the interaction model
$\gamma$ of $A$. 
Ports that do not belong to $P_A$ are not restricted and can participate in any interaction.  

Intuitively, an architecture can also be viewed as an incomplete
system: the application of an architecture consists in ``attaching''
its dangling ports to the operand components.  The operational
semantics is that of composing all components (operands and
coordinators) with the interaction model as described in the previous
paragraph.  The intuition behind transitions labelled by $\emptyset$
is that they represent {\em observable idling} (as opposed to internal
transitions).  This allows us to ``desynchronise'' combined
architectures (see \defn{archcomp}) in a simple manner, since
coordinators of one architecture can idle, while those of another 
performs a transition.  Note that, if $N = \emptyset$, in item 2
of \defn{archapp}, $N \cap P_i = \emptyset$, hence also, $q_i' = q_i$, for all
$i$.  Thus, intuitively, one can say that none of the components
moves.  Item 1, however, does allow one component to make a real move
labelled by $\emptyset$, if such a move exists.  Thus, the transitions
labelled by $\emptyset$ interleave, reflecting the idea that in BIP
synchronisation can happen only through ports.

\begin{example}[Mutual exclusion\cite{ABBJS14}]
  \label{ex:mutex:base}
  Consider the components $B_1$ and $B_2$ in \fig[(a)]{mutex}.  In order to
  ensure mutual exclusion of their \texttt{work} states, we apply the
  BIP architecture $A_{12} = (\{C_{12}\}, P_{12}, \gamma_{12})$, where $C_{12}$
  is shown in \fig[(b)]{mutex}, $P_{12} = \{b_1, b_2, b_{12}, f_1, \allowbreak f_2,
  f_{12}\}$ and $\gamma_{12} = \bigl\{\emptyset, \{b_1, b_{12}\}, \allowbreak \{b_2, b_{12}\}, \{f_1,
  f_{12}\}, \{f_2, f_{12}\}\bigr\}$.  
  The interface $P_{12}$ of $A_{12}$ covers all ports of $B_1$, $B_2$ and
  $C_{12}$.  Hence, the only possible interactions are those that explicitly
  belong to $\gamma_{12}$.  
  Assuming that the initial states of $B_1$ and $B_2$ are \texttt{sleep},
  and that of $C_{12}$ is \texttt{free}, neither
  of the two states $(\mathtt{free}, \mathtt{work}, \mathtt{work})$ and
  $(\mathtt{taken}, \mathtt{work}, \mathtt{work})$ is reachable, \ie the
  mutual exclusion property $(q_1 \neq \mathtt{work}) \lor (q_2 \neq 
  \mathtt{work})$\mdash 
  where $q_1$ and $q_2$ are state variables of $B_1$ and $B_2$ respectively\mdash holds in  $A_{12}(B_1, B_2)$.
\tri\end{example}

\begin{definition}[Composition of BIP architectures~\cite{ABBJS14}]
\label{defn:archcomp}
Let $A_1 = (\calC_1, P_1, \gamma_1)$ and $A_2 = (\calC_2, P_2, \gamma_2)$ be two BIP architectures.
Recall that $P_{\calC_i} = \bigcup_{C \in \cC_i} P_C$, for $i=1,2$. 
If $P_{\calC_1} \cap P_{\calC_2} = \emptyset$, then $A_1 \oplus A_2$ is given by $(\calC_1 \cup \calC_2 ,P_1 \cup P_2, \gamma_{12})$, where $\gamma_{12} = \{ N \subseteq P_1 \cup P_2 \mid N \cap P_i \in \gamma_i, \mbox{ for } i=1,2\}$.
In other words, $\gamma_{12}$ is the interaction model defined by the conjunction of the characteristic predicates of $\gamma_1$ and $\gamma_2$.
\end{definition}

\paragraph{Data-aware semantics.} Recently, the data-agnostic formalization of BIP interaction models was extended with data transfer, using the notion of \emph{interaction expressions}~\cite{BBJS14}.
Let $\calP$ be a global set of ports.
For each port $p \in \calP$, let $x_p\oftype\domD_p$ be a typed variable used for the data exchange at that port. 
For a set of ports $P \subseteq \calP$, let $X_P = (x_p)_{p \in P}$.
An interaction expression models the effect of an interaction among ports in terms of the data exchanged through their corresponding variables.

\begin{definition}[Interaction expression~\cite{BBJS14}]
  \label{defn:expression}
  An {\em interaction expression} is an expression of the form
  {
    \[
    \connector{P \leftarrow Q}{g(X_Q,X_L)}{(X_P,X_L) := \up(X_Q,X_L)}
              {(X_Q,X_L) := \down(X_P, X_L)}\,,
              \]
  }
  where
  $P,Q \subseteq \calP$ are {\em top} and {\em bottom} sets of ports;
  $L \subseteq \calP$ is a set of {\em local} variables;
  $g(X_Q,X_L)$ is the boolean {\em guard};
  $\up(X_Q,X_L)$ and $\down(X_P,X_L)$ are respectively the {\em up-} and {\em
    downward data transfer} expressions.

  For an interaction expression $\alpha$ as above, we define by
  $\head{\alpha} \bydef{=} P$, $\tail{\alpha} \bydef{=} Q$ and
  $\support{\alpha} \bydef{=} P \cup Q$ the sets of top, bottom and all
  ports in $\alpha$, respectively.  We denote $g_\alpha$, $\up_\alpha$ and
  $\down_\alpha$ the guard, upward and downward transfer corresponding expressions in $\alpha$.
\end{definition}

The first part of an interaction expression, $(P \leftarrow Q)$, describes
the control flow as a dependency relation between the bottom and
the top ports.  The expression in the brackets describes the data flow, first ``upward''---from bottom to top ports---and then ``downward''.
The guard $g(X_Q, X_L)$ relates these two parts:
interaction is enabled only when the values of the local variables together
with those of variables associated to the bottom ports satisfy a boolean
condition. As a side effect, an interaction expression may also
modify local variables in $X_L$.
Intuitively, such an interaction expression can \emph{fire} only if its guard is true.
When it fires, its upstream transfer is computed first using the values offered by its participating BIP components.  Then, the downstream
transfer modifies all the port variables with updated values.

\begin{definition}[BIP interaction models~\cite{BBJS14}]
	A {\em (data-aware) BIP interaction model} is a set $\Gamma$ of {\em simple BIP connectors} $\alpha$, which are BIP interaction expressions of the form
\[ \connector{\{w\} \leftarrow A}{g(X_A)}{(x_w,X_L) := \up(X_A)}{X_A := \down(x_w, X_L)}, \]
where $w \in P$ is a single top port, $A \subseteq P$ is a set of ports, such that $w \not\in A$, and neither $\up$ nor $g$ involves local variables.
\end{definition}

\begin{example}[Maximum]
\label{ex:maximum}
	Let $\calP = \{a,b,w,l\}$ be a set of ports of type integer, i.e., $x_p \oftype \domD_p = \mathbb{Z}$, for all $p \in \calP$, and consider the interaction expression (simple BIP connector)
\[ \alpha_{\max} = \connector{\{w\} \leftarrow \{a,b\}}{{\tt tt}}{x_l := \max(x_a,x_b)}{x_a,x_b := x_l}, \]
where ${\tt tt}$ is true. First, the connector takes the values presented at ports $a$ and $b$. Then, the simple BIP connector $\alpha_{\max}$ computes atomically the maximum of $x_a$ and $x_b$ and assigns it to its local variable $x_l$. Finally, $\alpha_{\max}$ assigns atomically the value of $x_l$ to both $x_a$ and $x_b$.  
\tri\end{example}

BIP interaction expressions capture complete information about all aspects
of component interaction---i.e. synchronisation and data transfer
possibilities---in a structured and concise manner.  Thus, by
examining interaction expressions, one can easily understand, on the one
hand, the interaction model used to compose components and, on the
other hand, how the valuations of data variables affect the
enabledness of the interactions and how these valuations are modified.
Furthermore, a formal definition of a composition operator on
interaction expressions is provided in \cite{BBJS14}, which allows combining
such expressions hierarchically to manage the complexity of systems
under design.  Since any BIP system can be flattened, this
hierarchical composition of interaction expressions is not relevant
for the semantic comparison of BIP and Reo in this paper.
Nevertheless, the possibility of concisely capturing all aspects of
component interaction in one place is rather convenient.

\subsection{Reo}
\label{sec:reo}

Reo is a coordination language wherein graph like structures express concurrency constraints (e.g., synchronization, exclusion, ordering, etc.) among multiple components. 
These structures consist of a composition of channels and nodes, collectively called \emph{connectors} or \emph{circuits}. 
A channel in Reo has exactly two \emph{ends}, and each end either accepts data items, if it is a \emph{source end}, or offers data items, if it is a \emph{sink end}. 
Moreover, a channel has a \emph{type} for its behaviour in terms of a formal constraint on the dataflow through its two ends. 
Its abstract definition of channels and its notion of channel types make Reo an extensible programming language.
Beside the established channel types (\tab{channels} contains some of them) Reo allows arbitrary user-defined channel types.

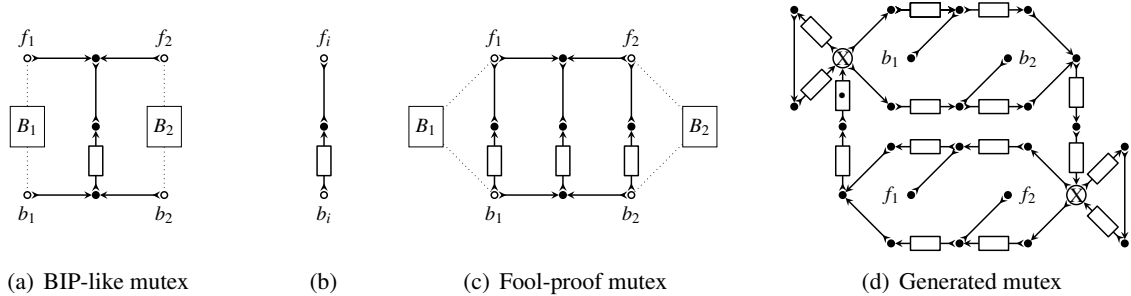
\begin{figure}[t]
\centering
\subfigure[BIP-like mutex]{\scalebox{.7}{\begin{tikzpicture}[baseline, node distance=1.3cm, every node/.style={transform shape}]
        \node[reobnode,label=above:$f_1$]  (c1) [] {};
        \node[reonode]  (c2) [right of=c1] {};
        \node[reobnode,label=above:$f_2$]  (c3) [right of=c2] {};
        \node[]  (c4) [below of=c1] {};
        \node[reonode]  (c5) [below of=c2] {};
        \node[]  (c6) [below of=c3] {};       
        \node[reobnode,label=below:$b_1$]  (c7) [below of=c4] {};
        \node[reonode]  (c8) [below of=c5] {};
        \node[reobnode,label=below:$b_2$]  (c9) [below of=c6] {};
        \node[]  (f1) [left of=c7] {};
        \node[]  (f2) [right of=c9] {};        
        \node[rectangle, draw, text centered, minimum height=2em] (B1) [below of=c1] {$B_1$};
        \node[rectangle, draw, text centered, minimum height=2em] (B2) [below of=c3] {$B_2$};
        \draw[sync,>->] (c1) to node {} (c2);
        \draw[sync,>->] (c3) to node {} (c2);
        \draw[sync,>->] (c7) to node {} (c8);
        \draw[sync,>->] (c9) to node {} (c8);
        \draw[syncdrain] (c2) to node {} (c5);
        \draw[fifo,>->] (c8) to node {} (c5);
        \draw[dotted, -] (B1) to node {} (c1);
        \draw[dotted, -] (B1) to node {} (c7);
        \draw[dotted, -] (B2) to node {} (c3);
        \draw[dotted, -] (B2) to node {} (c9);
        
        \node[] (above) [above of=c1] {};
        \node[] (below) [below of=c7] {};
\end{tikzpicture}
\label{fig:reomutexa}}}
\qquad
\subfigure[]{\scalebox{.7}{\begin{tikzpicture}[baseline, node distance=1.3cm, every node/.style={transform shape}]
        \node[reobnode,label=above:$f_i$]  (c1) {};
        \node[reonode]  (c2) [below of=c1] {};
        \node[reobnode,label=below:$b_i$]  (c3) [below of=c2] {};
        \draw[syncdrain] (c1) to node {} (c2);
        \draw[fifo,>->] (c3) to node {} (c2);
        
        \node[] (above) [above of=c1] {};
        \node[] (below) [below of=c3] {};
\end{tikzpicture}
\label{fig:reomutexb}}}
\qquad
\subfigure[Fool-proof mutex]{\scalebox{.7}{\begin{tikzpicture}[baseline, node distance=1.3cm, every node/.style={transform shape}]
        \node[]  (b1) [] {};
        \node[reobnode,label=above:$f_1$]  (c1) [right of=b1] {};
        \node[reonode]  (c2) [right of=c1] {};
        \node[reobnode,label=above:$f_2$]  (c3) [right of=c2] {};
        \node[]  (b2) [right of=c3] {};   
        \node[reonode]  (c4) [below of=c1] {};
        \node[reonode]  (c5) [below of=c2] {};
        \node[reonode]  (c6) [below of=c3] {};       
        \node[reobnode,label=below:$b_1$]  (c7) [below of=c4] {};
        \node[reonode]  (c8) [below of=c5] {};
        \node[reobnode,label=below:$b_2$]  (c9) [below of=c6] {};
        \node[]  (f1) [left of=c7] {};
        \node[]  (f2) [right of=c9] {};        
        \node[rectangle, draw, text centered, minimum height=2em] (B1) [below of=b1] {$B_1$};
        \node[rectangle, draw, text centered, minimum height=2em] (B2) [below of=b2] {$B_2$};
        \draw[sync,>->] (c1) to node {} (c2);
        \draw[sync,>->] (c3) to node {} (c2);
        \draw[sync,>->] (c7) to node {} (c8);
        \draw[sync,>->] (c9) to node {} (c8);
        \draw[dotted, -] (B1) to node {} (c1);
        \draw[dotted, -] (B1) to node {} (c7);
        \draw[dotted, -] (B2) to node {} (c3);
        \draw[dotted, -] (B2) to node {} (c9);
        \draw[syncdrain] (c1) to node {} (c4);
        \draw[syncdrain] (c2) to node {} (c5);
        \draw[syncdrain] (c3) to node {} (c6);
        \draw[fifo,>->] (c7) to node {} (c4);
        \draw[fifo,>->] (c8) to node {} (c5);
        \draw[fifo,>->] (c9) to node {} (c6);
        
        \node[] (above) [above of=c1] {};
        \node[] (below) [below of=c7] {};
\end{tikzpicture}
\label{fig:reomutexc}
}}
\qquad
\subfigure[Generated mutex]{\scalebox{.7}{\begin{tikzpicture}[baseline, node distance=1.3cm, every node/.style={transform shape}]
	\node[draw,thick,circle, label={[label distance=-0.46cm]0:X}] (a1) [] {};
	\node[reonode] (a2) [above right of=a1] {};
	\node[reonode] (a3) [right of=a2] {};
	\node[reonode] (a4) [right of=a3] {};
	\node[reonode] (a5) [below right of=a4] {};
	\node[reonode] (a6) [below right of=a1] {};
	\node[reonode] (a7) [right of=a6] {};
	\node[reonode] (a8) [right of=a7] {};
	\node[reonode] (a9) [below of=a5] {};
	\node[draw,thick,circle, label={[label distance=-0.46cm]0:X}] (a11) [below of=a9] {};
	\node[reonode] (a12) [above left of=a11] {};
	\node[reonode] (a13) [left of=a12] {};
	\node[reonode] (a14) [left of=a13] {};
	\node[reonode] (a15) [below left of=a14] {};
	\node[reonode] (a16) [below left of=a11] {};
	\node[reonode] (a17) [left of=a16] {};
	\node[reonode] (a18) [left of=a17] {};
	\node[reonode] (a19) [above of=a15] {};
	\node[reonode] (a20) [above left of=a1] {};
	\node[reonode] (a21) [below left of=a1] {};
	\node[reonode] (a22) [above right of=a11] {};
	\node[reonode] (a23) [below right of=a11] {};
	
	\node[reonode,label=left:$b_1$] (b1) [below left of=a3] {};
	\node[reonode,label=right:$b_2$] (b2) [above right of=a7] {};
	\node[reonode,label=left:$f_1$] (f1) [below left of=a13] {};
	\node[reonode,label=right:$f_2$] (f2) [above right of=a17] {};
	
	\draw[syncdrain] (a3) to node {} (b1);
	\draw[syncdrain] (a7) to node {} (b2);
	\draw[syncdrain] (a13) to node {} (f1);
	\draw[syncdrain] (a17) to node {} (f2);
	
	\draw[fifo,>->] (a2) to node {} (a3);
	\draw[fifo,>->] (a3) to node {} (a4);
	\draw[fifo,>->] (a6) to node {} (a7);
	\draw[fifo,>->] (a7) to node {} (a8);
	\draw[fifo,>->] (a12) to node {} (a13);
	\draw[fifo,>->] (a13) to node {} (a14);
	\draw[fifo,>->] (a16) to node {} (a17);
	\draw[fifo,>->] (a17) to node {} (a18);
	\draw[fifo,>->] (a5) to node {} (a9);
	\draw[fifo,>->] (a9) to node {} (a11);
	\draw[fifo,>->] (a15) to node {} (a19);
	\draw[fifofull,>->] (a19) to node {} (a1);
	\draw[fifo,>->] (a2) to node {} (a3);
	\draw[fifo,>->] (a2) to node {} (a3);
	\draw[fifo,>->] (a2) to node {} (a3);
	
	\draw[fifo,>->] (a1) to node {} (a20);
	\draw[fifo,>->] (a21) to node {} (a1);
	\draw[fifo,>->] (a11) to node {} (a22);
	\draw[fifo,>->] (a23) to node {} (a11);
	
	\draw[sync,>->] (a1) to node {} (a2);
	\draw[sync,>->] (a1) to node {} (a6);
	\draw[sync,>->] (a4) to node {} (a5);
	\draw[sync,>->] (a8) to node {} (a5);
	\draw[sync,>->] (a11) to node {} (a12);
	\draw[sync,>->] (a11) to node {} (a16);
	\draw[sync,>->] (a14) to node {} (a15);
	\draw[sync,>->] (a18) to node {} (a15);
	\draw[sync,>->] (a20) to node {} (a21);
	\draw[sync,>->] (a22) to node {} (a23);
        
    \node[] (below) [below of=a11] {};
	
\end{tikzpicture}
\label{fig:generatedmutex}}}
\caption{Fool-proof (c) mutual exclusion protocol in Reo, composed from a BIP-like (a) mutual exclusion connector and an altenator connector (b), and the generated Reo circuit (d) from \ex{mutextranslation}.}
\label{fig:reomutex}
\end{figure}

Multiple ends may glue together into \emph{nodes} with a fixed \emph{merge-replicate} behaviour: a data item out of a single sink end coincident on a node, atomically propagates to all source ends coincident on that node. This propagation happens only if all their respective channels allow the data exchange. 
A node is called a \emph{source node} if it consists of source ends, a \emph{sink node} if it consists of sink ends, and a \emph{mixed node} otherwise. 
Together, the source and sink nodes of a connector constitute its set of \emph{boundary nodes/ports}.

\begin{example}
\label{ex:reomutex}
\fig{reomutexa} shows a Reo connector that achieves mutual exclusion of components $B_1$ and $B_2$, exactly as the BIP system shown in \fig{mutex} does. This connector consists of a composition of channels and nodes in \tab{channels}. The Reo connector atomically accepts data from either $b_1$ or $b_2$ and puts it into the {\tt FIFO1} channel, a buffer of size one.
A full {\tt FIFO1} channel means that $B_1$ or $B_2$ holds the lock. 
If one of the components writes to $f_1$ or $f_2$, the {\tt SyncDrain} channel flushes the buffer, and the lock is released, returning the connector to its initial configuration, where $B_1$ and $B_2$ can again compete for exclusive access by attempting to write to $b_1$ or $b_2$.

Note that this connector is not fool-proof. Even if $B_1$ takes the lock, $B_2$ may release it, and vice versa. 
Hence, exactly as the BIP architecture in \fig{mutex}, the Reo connector in \fig{reomutexa} relies on the conformance of the coordinated components $B_1$ and $B_2$. 
The expected behaviour of $B_i$, $i=1,2$, is that it alternates writes on the $b_i$ and $f_i$, and that every write on $f_i$ comes after a write on $b_i$.
Depending on such assumptions may not be ideal.
The connector, shown in \fig{reomutexb}, makes this expected behaviour explicit. 
By composing two such connectors with the connector in \fig{reomutexa}, we obtain a fool-proof mutual exclusion protocol, as shown in \fig{reomutexc}. 
\fig{foolproofmutex} shows the constraint automaton semantics of the connector in \fig{reomutexc}.
Unlike the case of the connector in \fig{reomutexa} or the BIP architecture in \fig{mutex}, non-compliant writes to $b_i$ or $f_i$ ports of the connector in \fig{reomutexc} will {\em block} component $B_i$, but cannot {\em break} the mutual exclusion protocol that this connector implements.
\tri
\end{example}

\paragraph{Formal semantics of Reo.} 
Reo has a variety of formal semantics \cite{Arbab11, JA12}. In this paper we use its operational \emph{constraint automaton} (CA) semantics \cite{BSAR06}.

\begin{definition}[Constraint automata \cite{BSAR06}] 
\label{defn:CA} 
Let $\calN$ be a set of nodes and $\calD$ a set of data items. A data constraint is a formula in the language of the grammar
\[ g \to \top \mid \neg g \mid g \wedge g \mid \exists d_p (g) \mid d_p = v, \quad \mbox{ with } p \in \calN, v \in \calD, \]
where variable $d_p$ represents the data assigned to (i.e., exchanged through) port $p$.
Let $\models$ denote the obvious satisfaction relation between data constraints and data assignments $\delta : N \to \calD$, with $N \subseteq \calN$, and write $DC(\calN, \calD)$ for the set of all data constraints.
A constraint automaton (over data domain $\calD$) is a tuple
$\calA  = (Q,\calN,\to,q_0)$
where $Q$ is a set of states, $\calN$ is a finite set of nodes, ${\to} \subseteq Q \times 2^\calN \times DC(\calN, \calD) \times Q$ is a transition relation, and $q_0 \in Q$  is the initial state.
\end{definition}

In this paper, we consider only finite data domains, although most of our results generalize to infinite data domains. Over a finite data domain, the data constraint language $DC(\calN,\calD)$ is expressive enough to define any data assignment. 
For notational convenience, we relax, in this paper, the definition of data constraints and allow the use of set-membership and functions in the data constraints. However, we preserve the intention that a data constraint describes a set of data assignments.

\tab{channels} shows the CA semantics for some typical Reo primitives.
The CA semantics of every Reo connector can be derived as a composition of the constraint automata of its primitives, using the CA product operation in \defn{ProductCA}.
On the other hand, every constraint automaton (over a finite data domain) translates back into a Reo connector~\cite{BKK14}.
Because of this correspondence, we may consider Reo and CA as equivalent, and focus on constraint automata only.
\begin{table}[t]
\begin{center}
\begin{tabular}{ccccc}
{\tt Sync} 
& 
{\tt LossySync} 
& 
{\tt SyncDrain} 
&
{\tt FIFO1} 
&
{\tt Node} 
\\
\hline
\adjustbox{valign=c}{\scalebox{.8}{\begin{tikzpicture}
  \node[label=above:$A$] (in) at (0,0) {};
  \node[right of= in, xshift=1cm,label=above:$B$] (out) {};
  \draw[sync,>->] (in) -- (out); 
\end{tikzpicture}}}
&
\adjustbox{valign=c}{\scalebox{.8}{\begin{tikzpicture}
  \node[label=above:$A$] (in)                {};
  \node[right of=in, xshift=1cm,label=above:$B$] (out)  {};
  \draw[lossysync,>->] (in) -- (out);
\end{tikzpicture}}}
& 
\adjustbox{valign=c}{\scalebox{.8}{\begin{tikzpicture}
  \node[label=above:$A$] (in)                {};
  \node[right of=in, xshift=1cm,label=above:$A'$] (out)  {};
  \draw[syncdrain,>-<] (in) -- (out);
\end{tikzpicture}}}
& 
\adjustbox{valign=c}{\scalebox{.8}{\begin{tikzpicture}
  \node[label=above:$A$] (in) at (0,0) {};
  \node[right of=in, xshift=1cm,label=above:$B$] (out)  {};
  \draw[fifo,>->] (in) -- (out);
\end{tikzpicture}}}
&
\adjustbox{valign=c}{\scalebox{.8}{\begin{tikzpicture}
  \node[reobnode] (A) at (0,0) [] {};
  \node[above right of=A, node distance=.7cm, inner sep=1pt] (A1) {$B$};
  \node[above left of=A, node distance=.7cm, inner sep=1pt] (A2) {$A$};
  \node[below left of=A, node distance=.7cm, inner sep=1pt] (A3) {$B'$};
  \node[below right of=A, node distance=.7cm, inner sep=1pt] (A4) {$A'$};
  \draw[sync,->] (A1) -- (A);
  \draw[sync,>-] (A) -- (A2);
  \draw[sync,->] (A3) -- (A);
  \draw[sync,>-] (A) -- (A4);
\end{tikzpicture}}} 
\\ 
\\
\adjustbox{valign=c}{\scalebox{.8}{\begin{tikzpicture}[->,initial text={},>=stealth',shorten >=1pt,auto,node distance=1.5cm, semithick,scale=0.8, every node/.style={transform shape}]
  \tikzstyle{every state}=[fill=red,draw=none,text=white]

  \node[state] (A)              {$q$};

  \path (A) edge [in=110,out=70,loop above]  node {$\{A,B\},\top$} (A);
\end{tikzpicture}}}
&
\adjustbox{valign=c}{\scalebox{.8}{\begin{tikzpicture}[->,initial text={},>=stealth',shorten >=1pt,auto,node distance=1.5cm, semithick,scale=0.8, every node/.style={transform shape}]
  \tikzstyle{every state}=[fill=red,draw=none,text=white]

  \node[state] (A)              {$q$};

  \path (A) edge [in=110,out=70,loop above]  node {$\{A,B\},\top$} (A);
  \path (A) edge [in=290,out=250,loop below]  node {$\{A\},\top$} (A);
\end{tikzpicture}}}
&
\adjustbox{valign=c}{\scalebox{.8}{\begin{tikzpicture}[->,initial text={},>=stealth',shorten >=1pt,auto,node distance=1.5cm, semithick,scale=0.8, every node/.style={transform shape}]
  \tikzstyle{every state}=[fill=red,draw=none,text=white]

  \node[state] (A)              {$q$};

  \path (A) edge [in=110,out=70,loop above]  node {$\{A,A'\},\top$} (A);
\end{tikzpicture}}}
&
\adjustbox{valign=c}{\scalebox{.8}{\begin{tikzpicture}[->,initial text={},>=stealth',shorten >=1pt,auto,node distance=1.5cm, semithick,scale=0.8, every node/.style={transform shape}]
  \tikzstyle{every state}=[fill=red,draw=none,text=white]

  \node[state,initial] (A)              {$q_0$};
  \node[state] (B) [right of=A] {$q_1$};

  \path (A) edge [bend left=70] node [above] {$\{A\},\top$} (B);
  \path (B) edge [bend left=70] node [below] {$\{B\},\top$} (A);
\end{tikzpicture}}}
&
\adjustbox{valign=c}{\scalebox{.8}{\begin{tikzpicture}[->,initial text={},>=stealth',shorten >=1pt,auto,node distance=1.5cm, semithick,scale=0.8, every node/.style={transform shape}]
  \tikzstyle{every state}=[fill=red,draw=none,text=white]

  \node[state] (A)              {$q$};

  \path (A) edge [in=110,out=70,loop above] node {$\{B,A,A'\},\top$} (A);
  \path (A) edge [in=290,out=250,loop below]  node {$\{B',A,A'\},\top$} (A);
\end{tikzpicture}}}
\\
\hline
\end{tabular}
\end{center}
\caption{Some primitives in the Reo language with CA semantics over a singleton data domain $\calD$.}
\label{tab:channels}
\end{table}

If a constraint automaton $\calA$ has only one state, $\calA$ is called \emph{stateless}. 
If the data domain $\calD$ of $\calA$ is a singleton, $\calA$ is called a \emph{port automaton} \cite{KC09}. In that case, we omit data constraints, because all satisfiable constraints reduce to $\top$.

\begin{definition}[Product of CA \cite{BSAR06}]
\label{defn:ProductCA} 
Let $\calA_i = (Q_i,\calN_i,\to_i,q_{0,i})$ be a constraint automaton, for $i=1,2$. Then the product $\calA_1 \Join \calA_2$ of these automata is the automaton $(Q_1 \times Q_2,\calN_1 \cup \calN_2,\to,(q_{0,1},q_{0,2}))$, whose transition relation is the smallest relation obtained by the rule: $(q_1,q_2) \xrightarrow{N_1\cup N_2, g_1 \wedge g_2} (q_1',q_2')$ whenever
\begin{enumerate}
	\item $q_1 \xrightarrow{N_1,g_1}_1 q_1'$, $q_2 \xrightarrow{N_2,g_2}_2 q_2'$, and $N_1 \cap \calN_2 = N_2 \cap \calN_1$, or
	\item $q_i \xrightarrow{N_i,g_i}_i q_i'$, $N_j = \emptyset$, $g_j = \top$, $q_j' = q_j$, and $N_i \cap \calN_j = \emptyset$ with $j \in \{1,2\} \setminus \{i\}$.
\end{enumerate}
\end{definition}

It is not hard to see that constraint automata product operator is associative and commutative modulo equivalence of state names and data constraints.

\begin{definition}[Hiding in CA \cite{BSAR06}]
\label{defn:HidingCA} 
Let $\calA = (Q,\calN,\to,q_0)$ be a constraint automaton, and $P=\{p_1,\ldots,p_n\}$ a set of nodes. Then hiding nodes $P$ of $\calA$ yields an automaton $\exists P (\calA) = (Q,\calN \setminus P,\to_\exists,q_0)$, where $\to_\exists$ is given by $\{ (q,N \setminus P,\exists d_{p_1} \cdots \exists d_{p_n} (g),q') \mid (q,N,g,q') \in \ \to\}$.
\end{definition}

The hiding operator affects only transition labels, and preserves the structure of the automaton. 
Hence the hiding operator offers a technique to alter the interface of a component or connector without modifying its behaviour. 
As hiding of non-shared nodes distributes over the product, hiding of non-shared nodes commutes with constraint automata product.

\begin{example}[Product and hide]
\label{ex:prodhide}
Consider the Reo connectors in \fig{reomutex}. Using \defn{ProductCA}, and the primitive constraint automata from \tab{channels}, we find their CA semantics as shown in Figures \ref{fig:biplikemutex}, \ref{fig:alternator}, and \ref{fig:foolproofmutex}, respectively. 
If we compute the product of the automaton $\calA_0$ in \fig{biplikemutex} with the automata $\calA_i$, $i=1,2$, in \fig{alternator}, then we obtain an automaton $\calA$, whose part reachable from the initial state $(0,0,0)$ is shown in \fig{foolproofmutex}.
\tri\end{example}

\section{Port automata and BIP architectures}
\label{sec:PA2Arch}

To study the relation between BIP and Reo with respect to synchronization, we start by defining a correspondence between them in the data-agnostic domain. This correspondence consists of a pair of mappings between the sets containing semantic models of BIP and Reo connectors.
For the data independent semantic model of Reo connectors we choose port automata: a restriction of constraint automata over a singleton set as data domain. We model BIP connectors by BIP architectures introduced in \cite{ABBJS14}. 
In order to compare the behaviour of BIP and Reo connectors we interpret them as labeled transition systems. 
We define a mapping $\Reoa$ that transforms BIP architectures into port automata, and a mapping $\BIPa$ that transforms port automata into BIP architectures. 
We then show that these mappings preserve (1) properties closed under bisimulation, and (2) composition structure modulo semantic equivalence.

\subsection{Interpretation of BIP and Reo}
\label{sec:interpr_no_data}
To compare the behaviour of BIP and Reo connectors, we interpret all connectors as labeled transitions systems with one initial state and an alphabet $2^P$, for a set of ports $P$. 
We write $\LTS$ for the class of all such labeled transition systems. 

\begin{figure}[t]
\centering
\subfigure[data-agnostic domain]{\scalebox{1}{
\begin{tikzpicture}[baseline={([yshift=-.5ex]current bounding box.center)}]
  \matrix (m) [matrix of math nodes,row sep=1.5em,column sep=4em,minimum width=2em,ampersand replacement=\&]
  {
     \mathrm{Reo} \&  \& \mathrm{BIP} \\
     \PA \&  \& \Arch \\
     \& \LTS \& \\};
  \draw[-latex] (m-2-1) to node [below left] {$\fa$} (m-3-2);
  \draw[-latex] (m-2-1.10) to node [above] {$\BIPa$} (m-2-3.170);
  \draw[-latex] (m-2-3) edge node [below right] {$\ga$} (m-3-2);
  \draw[-latex] (m-2-3.190) to node [below] {$\Reoa$} (m-2-1.350);
  \draw[-latex] (m-1-1) to [bend left=10] node [right] {\cite{BSAR06}} (m-2-1);
  \draw[-latex] (m-2-1) to [bend left=10] node [left] {\cite{BKK14}} (m-1-1);
  \draw[-] (m-1-3.260) to node [right] {\cite{ABBJS14}} (m-2-3.100);
  \draw[-] (m-2-3.80) to node [] {} (m-1-3.280);
\end{tikzpicture}
\label{fig:intpr1}
}}
\qquad
\subfigure[data-sensitive domain]{\scalebox{1}{
\begin{tikzpicture}[baseline={([yshift=-.5ex]current bounding box.center)}]
  \matrix (m) [matrix of math nodes,row sep=1.5em,column sep=4em,minimum width=2em,ampersand replacement=\&]
  {
     \mathrm{Reo} \&  \& \mathrm{BIP} \\
     \CA \&  \& \BC \\
     \& \LTS \& \\};
  \draw[-latex] (m-2-1) to node [below left] {$\fb$} (m-3-2);
  \draw[-latex] (m-2-1.10) to node [above] {$\BIPb$} (m-2-3.170);
  \draw[-latex] (m-2-3) edge node [below right] {$\gb$} (m-3-2);
  \draw[-latex] (m-2-3.190) to node [below] {$\Reob$} (m-2-1.350);
  \draw[-latex] (m-1-1) to [bend left=10] node [right] {\cite{BSAR06}} (m-2-1);
  \draw[-latex] (m-2-1) to [bend left=10] node [left] {\cite{BKK14}} (m-1-1);
  \draw[-] (m-1-3.260) to node [right] {\cite{BBJS14}} (m-2-3.100);
  \draw[-] (m-2-3.80) to node [] {} (m-1-3.280);
\end{tikzpicture}
\label{fig:intpr2}
}}
\caption{Translations and interpretations in data-agnostic and data-sensitive domain.}
\label{fig:intpr}
\end{figure}
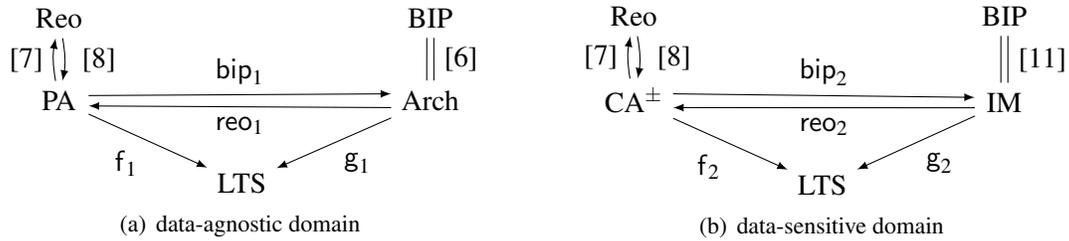

\fig{intpr1} shows our translations and interpretations. The objects $\PA$, $\Arch$ and $\LTS$ are, respectively, the classes
of port automata, BIP architectures, and labeled transition systems. The mappings $\BIPa$, $\Reoa$, $\fa$ and $\ga$, respectively, translate Reo to BIP, BIP to Reo, Reo to LTS, and BIP to LTS.

We first consider the semantics of connectors. Since BIP connectors differ internally from Reo connectors, we restrict our interpretation to their observable behaviour. This means that we hide the ports of the coordinating components in BIP architectures. For port automata this means that for our comparison, we implicitly assume that all names represent boundary nodes.

The interpretation of a port automaton in LTS is defined by
\begin{equation} 
\label{eqn:fa}
\fa((Q,\calN, \to, q_0)) = (Q, 2^\calN, \to, q_0). 
\end{equation}
Hence $\fa$ acts essentially as an identity function, justifying our choice of interpretation. 
Next, we define the interpretation of BIP architectures using their operational semantics obtained by applying them on dummy components and hiding all internal ports. 
Let $A = (\calC, P, \gamma)$ be a BIP architecture with coordinating components $\calC = \{C_1,\ldots,C_n\}$, $n \geq 0$, and $C_i = (Q_i,q_i^0,P_i,\to_i)$. 
Recall that $P_\calC = \bigcup_i P_i$ is the set of internal ports in $A$. Define 
$D = (\{q_D\},q_D, P, \{ (q_D,N,q_D) \mid \emptyset \neq N \subseteq P \setminus P_\calC\})$ as a dummy component relative to the BIP architecture $A$.
Using \defn{archapp}, we compute the BIP architecture application $A(\{D\}) = ((\prod_{i=1}^n Q_i) \times \{q_D\},(\vec{q}^0,q_D),P,\to_s)$ of $A$ to its dummy component $D$. 
Then, 
\begin{equation} 
\label{eqn:ga} 
\textstyle \ga(A) = (\prod_{i=1}^n Q_i\times \{q_D\},2^{P \setminus P_\calC},\{((\vec{q},q_D),N \setminus P_\calC,(\vec{q}',q_D)) \mid (\vec{q},q_D) \xrightarrow{N}_s (\vec{q}',q_D) \},(\vec{q}^0,q_D)) 
\end{equation}
In other words, $\ga(A)$ equals $A(\{D\})$ after hiding all internal ports $P_\calC$.
Note that we based our interpretation $\ga$ on the operational semantics of BIP architectures, i.e., BIP architecture application. This justifies the definition of interpretation of architectures.

Because of hiding, $\ga$ is not injective. Hence, our interpretation of BIP architectures induces a non-trivial equivalence given by equality of interpretations. In the sequel, we use a slightly stronger version of equivalence based on bisimulation \cite{Milner89}. 

\begin{definition}[Bisimulation~\cite{Milner89}]
\label{defn:bisim}
If $L_i = (Q_i,2^{P_i}, \to_i,q^0_i) \in \LTS$, $i=1,2$, then $L_1$ and $L_2$ are \emph{bisimilar} ($L_1 \cong L_2$) iff $P_1 = P_2$ and there exists $R \subseteq Q_1 \times Q_2$ such that $(q^0_1,q^0_2) \in R$, and $(q_1,q_2) \in R$ implies, for all $N \in 2^{P_i}$, $i,j \in \{1,2\}$ with $i\neq j$, if $q_i \xrightarrow{N}_i q_i'$, then, for some $q_j'$, $q_j \xrightarrow{N}_j q_j'$ and $(q_1',q_2') \in R$.
\end{definition}

\begin{definition}[Semantic equivalence]
	Let $\calA,\calB \in \PA$ be port automata and $A,B \in \Arch$ be BIP architectures. Then, $\calA$ and $\calB$ are {\em semantically equivalent} ($\calA \sim \calB$) iff $\fa(\calA) \cong \fa(\calB)$, and $A$ and $B$ are {\em semantically equivalent} ($A \sim B$) iff $\ga(A) \cong \ga(B)$.
\end{definition}

With a common semantics for BIP and Reo, we can define the notion of preservation of properties expressible in this common semantics. Recall that a property of labeled transition systems corresponds to the subset of labeled transition systems satisfying that property. 

\begin{definition}
\label{defn:prop-data-agnostic}
Let $P \subseteq \LTS$ be a property. Then, \emph{$\BIPa$ preserves $P$} iff $\fa(\calA) \in P \Leftrightarrow \ga(\BIPa(\calA)) \in P$ for all $\calA \in \PA$. Similarly, \emph{$\Reoa$ preserves $P$} iff $\ga(A) \in P \Leftrightarrow \fa(\Reoa(A)) \in P$ for all $A \in \Arch$.
\end{definition}

\subsection{BIP to Reo}
\label{sec:bip2reo_a}
To translate BIP connectors to Reo connectors, we first determine what elements of BIP architectures correspond to Reo connectors. Our interpretations of port automata and BIP architectures show that dangling ports in BIP architectures correspond to boundary port names in port automata. Furthermore, the mutual exclusion of the interactions in an interaction model in a BIP architecture simulates mutually exclusive firing of transitions in port automata. The definition of a coordinating component in a BIP architecture is almost identical to that of a port automaton, yielding an obvious translation.

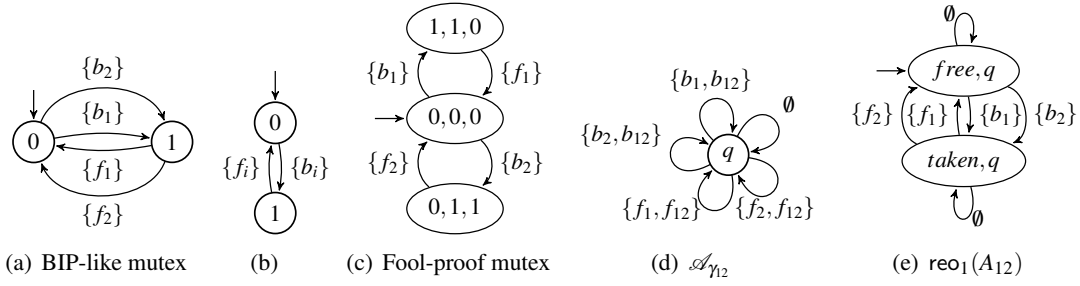
\begin{figure}[t]
\centering
\subfigure[BIP-like mutex]{
\scalebox{.8}{
\begin{tikzpicture}[->, initial text={}, initial where=above,>=stealth',shorten >=1pt,auto,node distance=1.5cm, semithick]
  \tikzstyle{every state}=[fill=red,draw=none,text=white]

  \node[state, initial] (A)              {$0$};
  \node[state] (B) [right of=A,xshift=0.8cm] {$1$};

  \path (A) edge [bend left=10,above] node {$\{b_1\}$} (B);
  \path (A) edge [bend left=70,above] node {$\{b_2\}$} (B);
  \path (B) edge [bend left=10,below] node {$\{f_1\}$} (A);
  \path (B) edge [bend left=70,below] node {$\{f_2\}$} (A);
  \label{fig:biplikemutex}
\end{tikzpicture}}}
\subfigure[]{
\scalebox{.8}{
\begin{tikzpicture}[->, initial text={}, initial where=above,>=stealth',shorten >=1pt,auto,node distance=1.5cm, semithick]
  \tikzstyle{every state}=[fill=red,draw=none,text=white]

  \node[state,initial] (A)              {$0$};
  \node[state] (B) [below of=A] {$1$};

  \path (A) edge [bend left=10,right] node {$\{b_i\}$} (B);
  \path (B) edge [bend left=10,left] node {$\{f_i\}$} (A);
  \label{fig:alternator}
\end{tikzpicture}}}
\subfigure[Fool-proof mutex]{
\scalebox{.8}{
\begin{tikzpicture}[->, initial text={}, initial where=left,>=stealth',shorten >=1pt,auto,node distance=1.5cm, semithick]
  \tikzstyle{every state}=[fill=red,draw=none,text=white]

  \node[elliptic state,initial] (A)              {$0,0,0$};
  \node[elliptic state] (B) [above of=A] {$1,1,0$};
  \node[elliptic state] (C) [below of=A] {$0,1,1$};

  \path (A) edge [bend left=50,left]  node {$\{b_1\}$} (B);
  \path (A) edge [bend left=50,right]  node {$\{b_2\}$} (C);
  \path (B) edge [bend left=50,right]  node {$\{f_1\}$} (A);
  \path (C) edge [bend left=50,left]  node {$\{f_2\}$} (A);
  \label{fig:foolproofmutex}
\end{tikzpicture}}}
\subfigure[$\calA_{\gamma_{12}}$]{
\scalebox{.8}{
 \begin{tikzpicture}[->, initial text={}, initial where=right,>=stealth',shorten >=1pt,auto,node distance=1.5cm, semithick]
  \tikzstyle{every state}=[fill=red,draw=none,text=white]
  \node[state]         (A)              {$q$};
  \path (A) edge [in=8,out=64,loop,above right]  node {$\emptyset$} (A);
  \path (A) edge [in=80,out=136,loop,above]  node {$\{b_1,b_{12}\}$} (A);
  \path (A) edge [in=152,out=208,loop,above left]  node {$\{b_2,b_{12}\}$} (A);
  \path (A) edge [in=224,out=280,loop,left]  node {$\{f_1,f_{12}\}$} (A);
  \path (A) edge [in=296,out=352,loop,below]  node {$\{f_2,f_{12}\}$} (A);
  \label{fig:mutexaReo}
\end{tikzpicture}}}
\subfigure[$\Reoa(A_{12})$]{
\scalebox{.8}{
\begin{tikzpicture}[->, initial text={}, initial where=left,>=stealth',shorten >=1pt,auto,node distance=1.5cm, semithick]
  \tikzstyle{every state}=[fill=red,draw=none,text=white]

  \node[elliptic state,initial] (A)              {$free,q$};
  \node[elliptic state] (B) [below of=A] {$taken,q$};

  \path (A) edge [loop above,left]  node {$\emptyset$} (A);
  \path (B) edge [loop below,right]  node {$\emptyset$} (B);
  \path (A) edge [bend left=10,right]  node {$\{b_1\}$} (B);
  \path (A) edge [bend left=70,right]  node {$\{b_2\}$} (B);
  \path (B) edge [bend left=10,left]  node {$\{f_1\}$} (A);
  \path (B) edge [bend left=70,left]  node {$\{f_2\}$} (A);
  \label{fig:mutexbReo}
\end{tikzpicture}}}
\caption{CA representations $(a)$, $(b)$, and $(c)$ of Reo connectors Figures \ref{fig:reomutexa}, \ref{fig:reomutexb}, and \ref{fig:reomutexc}, respectively; translation of the interaction model $(d)$ and BIP architecture $(e)$ of Figure \ref{fig:mutex}.}
\label{fig:mutexReo}
\end{figure}

Let $A = (\calC, P, \gamma)$ be a BIP architecture, with $\calC = \{C_1,\ldots,C_n\}$. Each $C_i$ corresponds trivially to a port automaton $\widetilde{C_i}$. Let $\calA_\gamma = (\{q\}, P, \to, q)$ be the stateless port automaton over $P$ with transition relation $\to$ defined by $ \{ (q, N, q) \mid N \in \gamma \}$. Then $\calA_\gamma$ can be seen as the port automata encoding of the interaction model $\gamma$. Recall that $P_\calC = \bigcup_{C \in \calC} P_C$. The corresponding port automaton of $A$ is given by 
\begin{equation} 
\label{eqn:ReoPA} 
\Reoa(A) = \exists P_\calC (\widetilde{C_1} \Join \cdots \widetilde{C_n} \Join \calA_\gamma).
\end{equation}

\begin{example}
\label{ex:mutextranslation}
We translate the BIP architecture in \ex{mutex:base} using \eq{ReoPA}. 
First, we transform $\gamma_{12}$ into a port automaton $\calA_{\gamma_{12}}$, shown in \fig{mutexaReo}. Then, we compute the product of $\calA_{\gamma_{12}}$ with the coordinating component $C_{12}$ to obtain the port automaton corresponding to the BIP architecture $A_{12}$, shown in \fig{mutexbReo}.
As mentioned in section \secn{reo}, we can transform the port automaton in \fig{mutexbReo} into a Reo connector, using the method described in \cite{BKK14}.
This mechanical translation yields the Reo connector in \fig{generatedmutex}. 
Here, the dot in the {\tt FIFO1} buffer indicates that its initial state is the full state.
The crossed node represents an {\em exclusive router}, which atomically takes data from a coincident sink end, and provides it to a single coincident source end.
Note that the port automaton semantics of the connector in \fig{reomutexa} (see \fig{biplikemutex}) is similar to the automaton in \fig{mutexbReo}, up to empty transitions.
\tri\end{example}

\subsection{Reo to BIP}

In BIP, interaction is memoryless. This means that a stateful channel in Reo must translate to a coordinating component. In fact, we may encode the whole Reo connector as one such component. 

Let $\calA_i$, $i=1,2$, be two port automata, and let $p \in \calN_1 \cap \calN_2$ be a shared port of $\calA_1$ and $\calA_2$. Suppose that we know how to translate $\calA_i$ into a BIP architecture $A_i$. If $p$ is not a dangling port of $A_1$, then, by symmetry, $p$ is not a dangling port of $A_2$. But now, $A_1$ and $A_2$ are not composable, because there components are not disconnected. Hence, since we want the translation to preserve composition, $p$ should be a dangling port.

Let $\calA = (Q, \calN, \to, q_0)$ be a port automaton. We construct a corresponding BIP architecture. Duplicate all ports in $\calN$ by defining $N' = \{ n' \mid n \in N \}$ for all $N \subseteq \calN$. We do not use a port $n'$, for $n \in \calN$, for composition. Their exact name is therefore not important, but merely their relation to its dangling brother $n$. Trivially, $\overline{\calA} = (Q, q_0, \calN', \to_c)$, with $\to_c \ = \{ (q,N',q') \mid (q,N,q') \in \ \to \}$, is a BIP component (cf., \defn{bipcomp}). Essentially, $\calA$ and $\overline{\calA}$ are the same labeled transition system. Now we define:
\begin{equation}
\label{eqn:BIPArch}
\BIPa(\calA) = (\{\overline{\calA}\}, \calN \cup \calN', \{N \cup N'\mid N \subseteq \calN\}).
\end{equation}
Thus, $\BIPa$ uses the port automaton as the coordinating component of the generated BIP architecture.

\begin{example}
Let $\calA$ be the port automaton in \fig{alternator} over the name set $\calN = \{ b_i,f_i\}$. We determine $\BIPa(\calA)$. Obtain $\overline{\calA}$ by adding adding a prime to each port in $\calA$. The interaction model of $\BIPa(\calA)$ consist of $\{N \cup N'\mid N \subseteq \calN\} = \bigl\{\emptyset, \{b_i,b_i'\}, \{f_i,f_i'\},\{b_i,b_i', f_i,f_i'\} \bigr\}.$ Hence, $\BIPa(\calA)$ is given by th BIP architecture $(\{\overline{\calA}\}, \{b_i,f_i,b_i',f_i'\}, \bigl\{\emptyset, \{b_i,b_i'\}, \{f_i,f_i'\},\{b_i,b_i', f_i,f_i'\} \bigr\} )$.
\end{example}

\subsection{Preservation of properties}

To confirm that translations $\Reoa$ and $\BIPa$ preserve properties, we first investigate whether \fig{intpr1} commutes, i.e., $\fa(\Reoa(A)) = \ga(A)$ and $\ga(\BIPa(\calA)) = \fa(\calA)$, for $A \in \Arch$ and $\calA \in \PA$.

First, note that the equations $\fa(\Reoa(A)) = \ga(A)$ and $\ga(\BIPa(\calA)) = \fa(\calA)$ cannot hold, because their state spaces differ. For example, $\ga$ alters the state space by adding the state of a dummy component, and $\Reoa$ adds the state of the port automaton encoding of the interaction model. Therefore we view these equations modulo bisimulation of labeled transition systems from \defn{bisim}.

Next, consider the equation $\fa(\Reoa(A)) \cong \ga(A)$, for some BIP architecture $A = (\{C_1,\ldots,C_n\},P,\gamma)$. Suppose that two distinct coordination components $C_i$ and $C_j$, $1 \leq i < j \leq n$, each contains an empty-labeled transition, i.e., there exist transistions $(q_i,\emptyset,q_i') \in \ \to_i$ and $(q_j,\emptyset,q_j') \in \ \to_j$. When we translate $A$ to a port automaton using $\Reoa$, the second rule in \defn{ProductCA} yields a \emph{single} transition in $\fa(\Reoa(A))$ from a global state where component $C_i$ is in state $q_i$ and $C_j$ is in state $q_j$, to a global state where $C_i$ is in state $q_i'$ and $C_j$ is in state $q_j'$. However, BIP semantics does not allow independent progress of state-changing empty-labeled transitions, which means that this single transition exists only when $q_i' = q_i$ and $q_j' = q_j$. Indeed, the first rule of \defn{archapp} allows either $C_i$ or $C_j$ to change state, and the second rule implies $q_i' = q_i$ and $q_j' = q_j$ for $N = \emptyset$. Because of this, we need to exclude BIP architectures where two coordinating components can make a state-changing empty-labeled transition. Moreover, as we consider composition of BIP architectures in \secn{compatcomposition}, we exclude BIP architectures containing a single coordinating component that can make a state-changing empty-labeled transition, and restrict $\Arch$ to 
$\Arch' = \{ A \in \Arch \mid \forall C_i \in \calC \, : \, q_i \xrightarrow{\emptyset}_i q_i' \Rightarrow q'_i = q_i\}$.
Finally, consider the equation $\ga(\BIPa(\calA)) \cong \fa(\calA)$, for some port automaton $\calA$. Note that the interaction model of $\BIPa(\calA)$ contains the empty set. Hence, the second rule in \defn{archapp} yields empty-labeled self-transitions in $\ga(\BIPa(\calA))$. Since $\fa$ acts like the identity, we conclude that $\calA$ should have empty-labeled self-transitions, i.e., $q' = q$ implies $(q ,\emptyset, q') \in \ \to$. On the other hand, suppose that $(q ,\emptyset, q') \in \ \to$. Then the coordinating component of $\BIPa(\calA)$ should not contain a state-changing empty-labeled transition, hence $q' = q$. Therefore, we restrict $\PA$ to $\PA' = \{ \calA \in \PA \mid q \xrightarrow{\emptyset} q' \Leftrightarrow q' = q \}$.

\begin{theorem}
\label{thm:commute_a} 
For all $\calA \in \PA'$ and $A \in \Arch'$ we have $\ga(\BIPa(\calA)) \cong \fa(\calA)$ and $\fa(\Reoa(A)) \cong \ga(A)$.
\end{theorem}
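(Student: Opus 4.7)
The plan is to prove each bisimilarity by exhibiting an explicit bisimulation relation and verifying the transfer conditions through a careful case analysis against the definitions. Because neither translation nor interpretation alters state spaces in an essential way (they at most pair states with a dummy state or the single state $q$ of $\calA_\gamma$), the natural candidates for the bisimulations are the obvious one-to-one correspondences, and the work is in matching transitions.

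For $\ga(\BIPa(\calA)) \cong \fa(\calA)$ with $\calA = (Q, \calN, \to, q_0) \in \PA'$, I propose $R_1 = \{((q, q_D), q) \mid q \in Q\}$. The initial states are related by construction. For the transfer conditions I would unfold $\BIPa(\calA) = (\{\overline{\calA}\}, \calN \cup \calN', \{N \cup N' \mid N \subseteq \calN\})$ and then apply $\ga$ via \defn{archapp}. For non-empty $N$, only case 2 can fire, and it does so exactly when $\overline{\calA}$ transitions by $q \xrightarrow{N} q'$ in $\calA$ (the dummy idles on its matching $\calN$-portion of the label). For $N = \emptyset$, the $\PA'$ biconditional ensures every state of $\calA$ carries an $\emptyset$-self-loop, which case 1 of \defn{archapp} lifts to the product. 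Hiding $\calN' = P_\calC$ strips the renamed copy from each label, recovering precisely the transitions of $\fa(\calA)$.

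For $\fa(\Reoa(A)) \cong \ga(A)$ with $A = (\{C_1, \ldots, C_n\}, P, \gamma) \in \Arch'$, I propose $R_2 = \{((\vec{q}, q_D), (\vec{q}, q)) \mid \vec{q} \in \prod_i Q_i\}$. The key subroutine is to derive, by induction on $n$ and iterated application of \defn{ProductCA}, an explicit characterisation of transitions in $\widetilde{C_1} \Join \cdots \Join \widetilde{C_n} \Join \calA_\gamma$: a non-empty transition $(\vec{q}, q) \xrightarrow{M} (\vec{q}', q)$ exists precisely when $M \in \gamma$, each $C_i$ with $M \cap P_{C_i} \neq \emptyset$ transitions by $q_i \xrightarrow{M \cap P_{C_i}} q_i'$, and each other $C_i$ stays put; an $\emptyset$-transition in the product exists only when some $C_i$ admits an $\emptyset$-transition at $q_i$. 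The first clause matches case 2 of \defn{archapp} line-by-line, the second clause matches case 1. After hiding $P_\calC$ on both sides, the external labels agree, and $R_2$ is a bisimulation.

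The main obstacle will be handling the $\emptyset$-transition semantics, where the two frameworks genuinely differ: case 1 of \defn{archapp} interleaves $\emptyset$-transitions (only one component moves at a time), whereas \defn{ProductCA} permits several components to contribute $\emptyset$-transitions simultaneously via its first clause. Without additional constraints, this asymmetry would break the bisimulation by allowing spurious state-changing $\emptyset$-transitions on one side that the other cannot simulate. The restrictions $\PA'$ and $\Arch'$ were crafted precisely to dissolve this mismatch: both collapse all $\emptyset$-transitions to self-loops, so neither mode of composition ever produces a state change on the $\emptyset$-label. Verifying that this collapse is correctly exploited on both sides, and that the dummy component's port set and transition alphabet (restricted to dangling ports) interact properly with case 2 of \defn{archapp}, is the delicate bookkeeping step of the proof.
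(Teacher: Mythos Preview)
Your proposal is correct and follows essentially the same approach as the paper: the paper's proof also exhibits the two obvious bisimulations $(q,q_D)\sim q$ and $(\vec{q},q_I)\approx(\vec{q},q_D)$ (where $q_I$ is the single state of $\calA_\gamma$), relying on \defn{archapp}, \defn{ProductCA}, the restrictions $\PA'$ and $\Arch'$, and the fact that the dummy component has no $\emptyset$-transition. Your write-up is in fact more explicit than the paper's sketch, spelling out the case analysis on $\emptyset$- versus non-$\emptyset$-labels and the inductive characterisation of transitions in the iterated product; the only cosmetic slip is that the pair order in your $R_2$ is reversed relative to the stated direction $\fa(\Reoa(A))\cong\ga(A)$, which is immaterial since bisimilarity is symmetric.
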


\begin{proof}
Using \defn{archapp}, \defn{ProductCA}, $A \in \Arch'$, $\calA \in \PA'$, and the fact that $(q_D,\emptyset,q_D) \notin \ \to_D$, it follows that (1) $\sim$ given by $(q,q_D) \sim q$ for all $q \in Q$ is a bisimulation between $\ga(\BIPa(\calA))$ and $\fa(\calA)$, where $Q$ is the state space of $\calA$, and (2) $\approx$ given by $(\vec{q},q_I) \approx (\vec{q},q_D)$ for all $\vec{q}=(q_i)_{i \in I} \in \prod_{i \in I} Q_i$, is a bisimulation, where $Q_i$, $i \in I$, are the state spaces of the coordinating components of $A$. 
See \cite{bip2reo} for a detailed proof.
\end{proof}

\begin{corollary}
\label{cor:props_pa2arch}
$\BIPa$ and $\Reoa$ preserve all properties closed under bisimulation, i.e., for all $P \subseteq \LTS$, $\calA \in \PA'$ and $A \in \Arch'$ we have $\fa(\calA) \in P \Leftrightarrow \ga(\BIPa(\calA)) \in P$ and $\ga(A) \in P \Leftrightarrow \fa(\Reoa(A)) \in P$.
\end{corollary}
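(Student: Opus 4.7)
The corollary should follow almost immediately from \thm{commute_a}, so the plan is to unpack the definitions and invoke the theorem. First, I would recall what it means for a property $P \subseteq \LTS$ to be closed under bisimulation: whenever $L_1, L_2 \in \LTS$ satisfy $L_1 \cong L_2$, we have $L_1 \in P$ iff $L_2 \in P$. (Note that $\cong$ is symmetric, so only one direction of this biconditional is needed in the hypothesis.)

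Next, fix an arbitrary property $P \subseteq \LTS$ closed under bisimulation, together with $\calA \in \PA'$ and $A \in \Arch'$. By \thm{commute_a}, we have $\ga(\BIPa(\calA)) \cong \fa(\calA)$ and $\fa(\Reoa(A)) \cong \ga(A)$. Applying closure of $P$ under bisimulation to the first bisimilarity yields $\fa(\calA) \in P \Leftrightarrow \ga(\BIPa(\calA)) \in P$, which is precisely the preservation condition for $\BIPa$ in \defn{prop-data-agnostic}. Applying it to the second bisimilarity yields $\ga(A) \in P \Leftrightarrow \fa(\Reoa(A)) \in P$, which is the preservation condition for $\Reoa$.

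There is essentially no obstacle here: the entire content of the corollary is packaged into \thm{commute_a}, and the only remaining work is the trivial logical step of transferring membership in $P$ across a bisimilarity. If anything, the only subtlety worth flagging in the write-up is that the domain restrictions $\PA'$ and $\Arch'$ are inherited directly from \thm{commute_a}, so the corollary statement could not be strengthened to all of $\PA$ and $\Arch$ without first extending the underlying bisimilarity result.
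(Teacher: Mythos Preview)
Your proposal is correct and matches the paper's treatment: the corollary is stated without proof there, implicitly relying on exactly the step you spell out, namely that \thm{commute_a} supplies the two bisimilarities and closure of $P$ under $\cong$ transfers membership. The remark about inheriting the domain restrictions $\PA'$ and $\Arch'$ is accurate but not something the paper bothers to note.
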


\begin{example}
	Consider the following safety property $\varphi$ satisfied by the Reo connector in \fig{reomutexc}: ``if $b_1$ fires, then $b_2$ fires only after $f_1$ fires''. Clearly, the automaton $\calA'$, obtained from \fig{foolproofmutex} by adding empty self-transitions, satisfies this property as well. Using \cor{props_pa2arch}, we conclude that the BIP architecture $\BIPa(\calA) = \BIPa(\calA')$ satisfies $\varphi$. More generally, \cor{props_pa2arch} allows model checking of BIP architectures with Reo model checkers. 
\tri\end{example}

\subsection{Compatibility with composition} 
\label{sec:compatcomposition}

BIP architectures and port automata have their own notions of composition.
This raises the question of whether our translations preserve composition structures. 
We show that, under specific conditions, our translations preserve composition modulo semantic equivalence. Recall the port automaton representation of the interaction model (\secn{bip2reo_a}).

\begin{lemma}
\label{lem:interaction}
Let $A_i = (\calC_i, P_i, \gamma_i) \in \Arch$, $i=1,2$, with $P_{\calC_1} \cap P_{\calC_2} = \emptyset$ and $\emptyset \in \gamma_1 \cap \gamma_2$.  
Then, we have that $\calA_{\gamma_{12}} \sim \calA_{\gamma_1} \Join \calA_{\gamma_2}$,
where $\gamma_{12}$ be the interaction model of $A_1 \oplus A_2$.
\end{lemma}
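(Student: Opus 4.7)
My plan is to exploit the fact that both $\calA_{\gamma_{12}}$ and $\calA_{\gamma_1} \Join \calA_{\gamma_2}$ are single-state automata, so the required bisimulation will collapse to a set-theoretic equality between the labels decorating their unique self-loops. Recall from \secn{bip2reo_a} that each $\calA_{\gamma_i}$ is stateless with port set $P_i$; hence the product $\calA_{\gamma_1} \Join \calA_{\gamma_2}$ has state space $\{(q_1,q_2)\}$ and port set $P_1 \cup P_2$, matching that of $\calA_{\gamma_{12}}$. I will take as candidate bisimulation the singleton relation $R = \{(q,(q_1,q_2))\}$, so that it suffices to show that every self-loop on one side is mirrored on the other, which I will do by proving the two inclusions of label sets.

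For the forward inclusion, given $N \in \gamma_{12}$, I will set $N_1 := N \cap P_1$ and $N_2 := N \cap P_2$; by the definition of $\gamma_{12}$ in \defn{archcomp}, these lie in $\gamma_1$ and $\gamma_2$ respectively. A short computation yields $N_1 \cap P_2 = N \cap P_1 \cap P_2 = N_2 \cap P_1$ and $N_1 \cup N_2 = N \cap (P_1 \cup P_2) = N$, so Rule~1 of \defn{ProductCA} will produce the transition $(q_1,q_2) \xrightarrow{N} (q_1,q_2)$. For the backward inclusion, I will case-split on which rule of \defn{ProductCA} generates a given transition labelled $N$. Under Rule~1, $N = N_1 \cup N_2$ with $N_i \in \gamma_i$, $N_i \subseteq P_i$, and $N_1 \cap P_2 = N_2 \cap P_1$; a brief calculation then gives $N \cap P_1 = N_1 \cup (N_2 \cap P_1) = N_1$, and symmetrically $N \cap P_2 = N_2$, hence $N \in \gamma_{12}$. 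Under Rule~2 with $N_2 = \emptyset$ and $N_1 \cap P_2 = \emptyset$, one has $N = N_1$, so $N \cap P_1 = N_1 \in \gamma_1$ and $N \cap P_2 = \emptyset$; here the hypothesis $\emptyset \in \gamma_2$ is exactly what is needed to conclude $N \in \gamma_{12}$, and the symmetric Rule~2 case will be identical.

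The only delicate point of the argument is precisely this Rule~2 (interleaving) case, and it is what forces the hypothesis $\emptyset \in \gamma_1 \cap \gamma_2$: the port-automaton product can spontaneously manufacture self-loops whose restriction to one side is empty, and such labels lie in $\gamma_{12}$ only if the empty interaction is explicitly available on both sides. The other hypothesis $P_{\calC_1} \cap P_{\calC_2} = \emptyset$ plays no direct role in the bisimulation itself---$\calA_{\gamma_i}$ is oblivious to coordinating components---and serves only to ensure that $A_1 \oplus A_2$, and hence $\gamma_{12}$, is well-defined per \defn{archcomp}.
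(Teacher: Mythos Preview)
Your proof is correct and follows exactly the approach the paper intends: the paper's own proof merely states that the result ``follows easily from \defn{ProductCA} and \defn{archcomp}'' and defers details to an external reference, and your argument is precisely the unfolding of those two definitions. Your identification of where the hypothesis $\emptyset \in \gamma_1 \cap \gamma_2$ is needed (the Rule~2 interleaving case) and your remark that $P_{\calC_1} \cap P_{\calC_2} = \emptyset$ serves only to make $A_1 \oplus A_2$ well-defined are both accurate.
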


\begin{proof}
Follows easily from \defn{ProductCA} and \defn{archcomp}. See \cite{bip2reo} for a detailed proof.
\end{proof}

Suppose that $\Reoa(A_1 \oplus A_2) \sim \Reoa(A_1) \Join \Reoa(A_2)$, for any two BIP architectures $A_1,A_2 \in \Arch'$. \defn{ProductCA} implies $N_{\Reoa(A_1 \oplus A_2)} = N_{\Reoa(A_1)\Join \Reoa(A_2)} = N_{\Reoa(A_1)} \cup N_{\Reoa(A_2)}$. In other words, the name set of port automaton $\Reoa(A_1 \oplus A_2)$ is the union of the name set of the port automata $\Reoa(A_i)$, $i=1,2$. Hence, $N_{\Reoa(A_i)} \subseteq N_{\Reoa(A_1 \oplus A_2)}$, for $i=1,2$. This means that the dangling ports of $\Reoa(A_1 \oplus A_2)$ contain all dangling ports of $\Reoa(A_i)$. Therefore, we need to assume that $P_{\calC_1} \cap P_2 = P_{\calC_2} \cap P_1 = \emptyset$.

Note that this is only a mild assumption. Indeed, if $p \in P_{\calC_1} \cap P_2$ is a dangling port of $P_2$, connected directly to a component in $A_1$. Then, we first add a (dangling) port $x$ to $A_1$ and synchronize $p$ with $p'$ by considering the BIP interaction model $\gamma_1' = \{ N \cup \{x\} \mid p \in N \in \gamma_1 \} \cup \{ N \mid p \notin N \in \gamma \}$. Finally, we rename $p$ to $x$ in $A_2$. The resulting architectures satisfy the assumption.

\begin{theorem}
\label{thm:homreo} 
$\Reoa(A_1 \oplus A_2) \sim \Reoa(A_1) \Join \Reoa(A_2)$ for all $A_i = (\calC_i,P_i,\gamma_i) \in \Arch'$, with $P_{\calC_1} \cap P_2 = P_{\calC_2} \cap P_1 = \emptyset$ and $\emptyset \in \gamma_1 \cap \gamma_2$.
\end{theorem}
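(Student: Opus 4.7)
The plan is to reduce $\Reoa(A_1 \oplus A_2)$ to $\Reoa(A_1) \Join \Reoa(A_2)$ through a chain of semantic equivalences, leveraging Lemma~\ref{lem:interaction}, associativity and commutativity of $\Join$ (noted after \defn{ProductCA}), and distributivity of hiding over product for non-shared ports (noted after \defn{HidingCA}).

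First, I would unfold both sides of the claimed equivalence using \eqref{eqn:ReoPA}. Writing $\calC_i=\{C_1^i,\ldots,C_{n_i}^i\}$ and abbreviating $B_i:=\widetilde{C_1^i}\Join\cdots\Join\widetilde{C_{n_i}^i}$, the right-hand side unfolds to
\[
\exists P_{\calC_1}(B_1\Join\calA_{\gamma_1})\;\Join\;\exists P_{\calC_2}(B_2\Join\calA_{\gamma_2}),
\]
while the left-hand side is $\exists P_{\calC_1\cup\calC_2}(B_1\Join B_2\Join\calA_{\gamma_{12}})$, where $\gamma_{12}$ is the interaction model of $A_1\oplus A_2$.

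Second, I would invoke Lemma~\ref{lem:interaction}. Its hypotheses require $\emptyset\in\gamma_1\cap\gamma_2$ (assumed) and $P_{\calC_1}\cap P_{\calC_2}=\emptyset$; the latter follows from $P_{\calC_2}\subseteq P_2$ combined with the assumption $P_{\calC_1}\cap P_2=\emptyset$. Hence $\calA_{\gamma_{12}}\sim\calA_{\gamma_1}\Join\calA_{\gamma_2}$. Since $\Join$ and hiding respect $\sim$, the left-hand side is equivalent to $\exists P_{\calC_1\cup\calC_2}(B_1\Join B_2\Join\calA_{\gamma_1}\Join\calA_{\gamma_2})$. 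Applying associativity and commutativity of $\Join$ (modulo $\sim$), I regroup this as
\[
\exists P_{\calC_1\cup\calC_2}\bigl((B_1\Join\calA_{\gamma_1})\Join(B_2\Join\calA_{\gamma_2})\bigr),
\]
and, since $P_{\calC_1}$ and $P_{\calC_2}$ are disjoint, split the outer hiding as $\exists P_{\calC_1}\circ\exists P_{\calC_2}$.

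Finally, I would push each hiding inside its corresponding factor. The node set of $B_1\Join\calA_{\gamma_1}$ equals $P_1$, and that of $B_2\Join\calA_{\gamma_2}$ equals $P_2$. Under the assumptions $P_{\calC_1}\cap P_2=\emptyset$ and $P_{\calC_2}\cap P_1=\emptyset$, the ports in $P_{\calC_1}$ occur only in the left factor and are therefore non-shared between the two factors; symmetrically for $P_{\calC_2}$. Distributivity of hiding over product then yields exactly the right-hand side. The main obstacle lies in this last step: the compatibility hypothesis $P_{\calC_1}\cap P_2=P_{\calC_2}\cap P_1=\emptyset$ is essential precisely because distributivity of hiding holds only for ports outside the shared alphabet; omitting it would entangle the two hidings with shared synchronisations and break the decomposition. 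A careful bookkeeping of which ports appear in which factor\textemdash and a verification that the coordinating ports of each architecture are indeed confined to their own side\textemdash suffices to conclude.
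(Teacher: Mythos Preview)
Your proposal is correct and follows essentially the same route as the paper's proof: unfold $\Reoa$ via \eqref{eqn:ReoPA}, apply \lem{interaction} to split $\calA_{\gamma_{12}}$, regroup using associativity and commutativity of $\Join$, and then use the disjointness hypotheses $P_{\calC_1}\cap P_2=P_{\calC_2}\cap P_1=\emptyset$ to push the two hidings into their respective factors. The paper additionally remarks that, since $\fa$ is essentially the identity, the semantic equivalence $\sim$ coincides with port-automata bisimulation, which is what justifies your appeal to ``$\Join$ and hiding respect $\sim$''; you may wish to make that identification explicit.
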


\begin{proof}
Let $\calC_1 \cup \calC_2 = \{C_1,\ldots,C_n,\ldots,C_{m}\}$, with $C_i \in \calC_1$ iff $i \leq n$. By definition, we have $\Reoa(A_1 \oplus A_2) = \exists P_{\calC_1 \cup \calC_2} (\widetilde{C_1} \Join \cdots \widetilde{C_n}\Join \widetilde{C_{n+1}} \Join \cdots \widetilde{C_m} \Join \calA_{\gamma_{12}})$. Next, we use the bisimulation of port automata (i.e., constraint automata with data contraint $\top$) as defined in \cite{BSAR06}. Composition ($\Join$) of port automata is commutative and associative up to bisimulation \cite{BSAR06}. Using \lem{interaction}, it follows that $\Reoa(A_1 \oplus A_2) \cong \exists P_{\calC_1} \exists P_{\calC_2} (\widetilde{C_1} \Join \cdots \widetilde{C_n} \Join \calA_{\gamma_1} \Join \widetilde{C_{n+1}} \Join \cdots \widetilde{C_m}\Join\calA_{\gamma_2})$.
Indeed, since $\fa$ is like the identity, it follows that semantic equivalence $\sim$ coincides with bisimulation $\simeq$ of port automata as defined in \cite{BSAR06}.
Now, we use our assumption that $P_{\calC_1} \cap P_2 = P_{\calC_2} \cap P_1 = \emptyset$, and the fact that $\widetilde{C_1},\ldots, \widetilde{C_n}$, and $\calA_{\gamma_1}$ do not use ports from $P_{\calC_2}$. Then, $\Reoa(A_1 \oplus A_2) \cong \exists P_{\calC_1} (\widetilde{C_1} \Join \cdots \widetilde{C_n} \Join \calA_{\gamma_1}) \Join \exists P_{\calC_2} (\widetilde{C_{n+1}} \Join \cdots \widetilde{C_m}\Join \calA_{\gamma_2}))$.
We conclude that $\Reoa(A_1 \oplus A_2) \cong \Reoa(A_1) \Join \Reoa(A_2)$.
Since, $\fa$ is like the identity, it is not hard to see that $\fa$ takes bisimilar port automata to bisimilar labeled transition systems. Therefore, $\Reoa$ is a homomorphism up to semantic equivalence, i.e., $\Reoa(A_1 \oplus A_2) \sim \Reoa(A_1) \Join \Reoa(A_2)$.
\end{proof}

\begin{theorem}
\label{thm:hombip}
$\BIPa(\calA_1 \Join \calA_2) \sim \BIPa(\calA_1) \oplus \BIPa(\calA_2)$ for all $\calA_i \in \PA'$.
\end{theorem}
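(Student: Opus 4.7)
The plan is to unfold both sides through the interpretation $\ga$ and exhibit an explicit bisimulation, paralleling the strategy used for \thm{commute_a}. First I would verify that $\calA_1 \Join \calA_2 \in \PA'$ whenever $\calA_1, \calA_2 \in \PA'$: by \defn{ProductCA}, every empty-labeled transition of the product arises either from simultaneous empty-labeled moves in both factors (rule~1) or from an empty-labeled move in one factor while the other idles (rule~2), and since empty-labeled transitions in each $\calA_i$ are self-loops by the $\PA'$ restriction, the same holds in $\calA_1 \Join \calA_2$. Hence \thm{commute_a} applies and yields $\ga(\BIPa(\calA_1 \Join \calA_2)) \cong \fa(\calA_1 \Join \calA_2)$, which, since $\fa$ acts as the identity on the underlying transition system, is just the product port automaton viewed as an LTS over $2^{\calN_1 \cup \calN_2}$.

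Next I would unfold $\ga(\BIPa(\calA_1) \oplus \BIPa(\calA_2))$ explicitly using \eq{BIPArch} and \defn{archcomp}: the composite architecture has two coordinators $\overline{\calA_1}, \overline{\calA_2}$ with internal ports $\calN_1' \cup \calN_2'$, interface $\calN_1 \cup \calN_1' \cup \calN_2 \cup \calN_2'$, and interaction model $\{M_1 \cup M_2 \mid M_i \in \gamma_i\}$ where $\gamma_i = \{N \cup N' \mid N \subseteq \calN_i\}$. Applying this architecture to its dummy component and then hiding the internal ports leaves a state space of the form $Q_1 \times Q_2 \times \{q_D\}$ with labels in $2^{\calN_1 \cup \calN_2}$, matching, up to the extra dummy coordinate, what appeared on the other side.

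I would then propose the candidate bisimulation $R = \{((q_1, q_2, q_D),\, ((q_1,q_2), q_D)) \mid q_i \in Q_i\}$ and verify the transfer condition in both directions. A non-empty move $(q_1, q_2) \xrightarrow{N} (q_1', q_2')$ in $\calA_1 \Join \calA_2$ is matched on the BIP side by the interaction $M_1 \cup M_2$ with $M_i = (N \cap \calN_i) \cup (N \cap \calN_i)'$: the coordinator $\overline{\calA_i}$ fires on $(N \cap \calN_i)'$ exactly when $\calA_i$ fires on $N \cap \calN_i$, and the two cases of \defn{ProductCA} correspond to whether both $M_i$ are non-empty or only one is. Empty transitions match via the $\emptyset$-self-loops guaranteed on both sides by item~1 of \defn{archapp} combined with the $\PA'$ restriction.

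The main obstacle I anticipate is the careful bookkeeping around rule~2 of \defn{ProductCA}, where one factor fires a non-empty $N_i$ while the other idles. On the BIP side this must be matched by taking $M_j = \emptyset$ in $\gamma_j$ and letting $\overline{\calA_j}$ stay put, which is legitimate precisely because $\emptyset \in \gamma_j$ by construction of $\BIPa$ and because $q_j \xrightarrow{\emptyset} q_j$ holds in $\overline{\calA_j}$ by the $\PA'$ assumption. Once this correspondence is set up, the remaining verification is routine case analysis that can be deferred to the accompanying technical report, as was done for \thm{commute_a}.
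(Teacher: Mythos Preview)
Your approach is correct but takes a genuinely different route from the paper. The paper does \emph{not} build an explicit bisimulation between $\ga(\BIPa(\calA_1)\oplus\BIPa(\calA_2))$ and $\ga(\BIPa(\calA_1\Join\calA_2))$. Instead it reduces \thm{hombip} to \thm{homreo}: one substitutes $A_i=\BIPa(\calA_i)$ into $\Reoa(A_1\oplus A_2)\sim\Reoa(A_1)\Join\Reoa(A_2)$, then uses \thm{commute_a} twice (giving $\Reoa(\BIPa(\calA_i))\sim\calA_i$ and $\fa\circ\Reoa\cong\ga$) together with the fact that port-automata bisimulation is a congruence for $\Join$. Everything collapses to $\ga(\BIPa(\calA_1)\oplus\BIPa(\calA_2))\cong\fa(\calA_1\Join\calA_2)\cong\ga(\BIPa(\calA_1\Join\calA_2))$ with no new case analysis. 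What the paper's route buys is brevity and the avoidance of re-examining \defn{archapp} and \defn{ProductCA}; what your route buys is self-containment (no dependence on \thm{homreo} or on the congruence claim) and a more transparent explanation of \emph{why} the two compositions match.

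One small correction to your unfolding: the composed interaction model $\gamma_{12}$ of $\BIPa(\calA_1)\oplus\BIPa(\calA_2)$ is \emph{not} all of $\{M_1\cup M_2\mid M_i\in\gamma_i\}$ when $\calN_1\cap\calN_2\neq\emptyset$. By \defn{archcomp} one needs $(M_1\cup M_2)\cap P_i\in\gamma_i$, which forces the unprimed parts of $M_1$ and $M_2$ to agree on the shared names; concretely $\gamma_{12}=\{\,K\cup(K\cap\calN_1)'\cup(K\cap\calN_2)'\mid K\subseteq\calN_1\cup\calN_2\,\}$. This does not break your argument---the specific $M_i=(N\cap\calN_i)\cup(N\cap\calN_i)'$ you use for the Reo-to-BIP direction always satisfy the agreement constraint, and in the BIP-to-Reo direction the correct $\gamma_{12}$ only gives you \emph{fewer} interactions to match---but the description should be tightened before the case analysis is written out.
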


\begin{proof}
Note that, since $\fa$ is like the identity, semantic equivalence $\sim$ coincides with bisimulation $\simeq$ of port automata \cite{BSAR06}. As $\simeq$ is a congruence with respect to the composition $\Join$ of port automata, we conclude that $\sim$ is a congruence too (i.e., $\fa(\calA_i) \cong \fa(\calA_i')$, for $i=1,2$, implies $\fa(\calA_1 \Join \calA_2) \cong \fa(\calA_1' \Join \calA_2')$). 

Let $\calA_i\in \PA'$, $i=1,2$, be two port automata. From \thm{homreo}, we conclude that
$\fa(\Reoa(A_1 \oplus A_2)) \cong \fa(\Reoa(A_1) \Join \Reoa(A_2))$, for any $A_1,A_2 \in \Arch'$. Substitute $A_i = \BIPa(\calA_i)$, for $i=1,2$. Then,
$\fa(\Reoa(\BIPa(\calA_1) \oplus \BIPa(\calA_2))) \cong \fa(\Reoa(\BIPa(\calA_1)) \Join \Reoa(\BIPa(\calA_2)))$. Thus, $\fa(\Reoa(\BIPa(\calA_i))) \cong \ga(\BIPa(\calA_i)) \cong \fa(\calA_i)$, for $i=1,2$, by \thm{commute_a}. Hence, using that $\sim$ is a congruence, we obtain $\ga(\BIPa(\calA_1) \oplus \BIPa(\calA_2)) \cong \fa(\calA_1 \Join \calA_2)$. Therefore, $\ga(\BIPa(\calA_1) \oplus \BIPa(\calA_2)) \cong \ga(\BIPa(\calA_1\Join \calA_2))$.
\end{proof}

\begin{example}
	For any two ports $x$ and $y$, let $\calA_{\{x,y\}}$ be the port automaton of a synchronous channel (cf., \tab{channels}), and let $C_{\{x,y\}}$ be its corresponding BIP component. Suppose we need to translate $\calA_{\{a,b\}} \Join \calA_{\{b,c\}}$ to a BIP architecture. Then we first compute $\BIPa(\calA_{\{a,b\}}) = (\{ C_{\{a',b'\}}\}, \{a,a',b,b'\}, \gamma_{\{a,b\}})$, with $\gamma_{\{a,b\}} = \{\emptyset, \{a,a'\},\{b,b'\},\{a,a',b,b'\}\}$. Next, we compute $\BIPa(\calA_{\{b,c\}}) = (\{ C_{\{b'',c''\}}\}, \{b,b'',c,c''\}, \gamma_{\{b,c\}})$, with $\gamma_{\{b,c\}} = \{\emptyset, \{b,b''\},\{c,c''\},\{b,b'',c,c''\}\}$. Note that we need to use a double prime now, because otherwise $b'$ would be a shared port of $C_{\{a',b'\}}$ and $C_{\{b'',c''\}}$. Using \thm{hombip}, we find that $ \BIPa(\calA_{\{a,b\}} \Join \calA_{\{b,c\}} = \BIPa(\calA_{\{a,b\}}) \oplus \BIPa(\calA_{\{b,c\}}) = ( \{ C_{\{a',b'\}}, C_{\{b'',c''\}} \}, \{a,a',b,b',b'',c,c''\}, \gamma_{\{a,b,c\}})$, where $\gamma_{\{a,b,c\}}$ is the composition of $\gamma_{\{a,b\}}$ and $\gamma_{\{b,c\}}$.
\end{example}

\begin{example}
Consider the port automaton $\calA'$, obtained from \fig{foolproofmutex} by adding empty self-transitions. If we translate $\calA'$ to BIP, we obtain a BIP architecture $B_1 = \BIPa(\calA')$, which has only a single coordinating component. From \ex{prodhide} we conclude $\calA' \cong \calA_0' \Join \calA_1' \Join \calA_2'$, where $\calA_0$ is the port automaton in \fig{biplikemutex}, $\calA_i$, $i=1,2$, is the port automaton in \fig{alternator}, and $\calA_i'$ is obtained from $\calA_i$ by adding empty self-transitions. Now consider $B_3 = \BIPa(\calA_0') \oplus \BIPa(\calA_1') \oplus \BIPa(\calA_2')$. Using \defn{archcomp}, we see that $B_3$ has three coordinating components. Nevertheless, \thm{hombip} shows that $B_3$ is semantically equivalent to $B$. Therefore, \thm{hombip} allows to compute translations compositionally.
\tri\end{example}

\section{Stateless CA's and interaction models}
\label{sec:CA2BIPconn}

In \secn{PA2Arch} we established a correspondence between port automata and BIP architectures. Here, we offer translations between data-aware connector models in BIP and Reo.

First we determine the semantic model of the connectors. For BIP connectors we use BIP interaction models, i.e., sets of interaction expressions $\alpha$, with a single top port that is not a bottom port, and whose guard and up functions are independent of local variables (\defn{expression}). We assume that every top port occurs only in one interaction expression per BIP interaction model. We denote the class of BIP interaction models by $\BC$.
For the semantics of Reo connectors we take a pair consisting of a constraint automaton together with a partition of its node set into source nodes $\calN_{src}$, mixed nodes $\calN_{mix}$, and sink nodes $\calN_{snk}$. We call such pairs \emph{constraint automata with polarity}. 
Due to the absence of coordinating components in the data sensitive model for BIP, we restrict ourselves here to stateless constraint automata, since BIP interaction expressions are stateless \cite{ABBJS14, BBJS14}.
We write $\CA$ for the class of all stateless constraint automata with polarity, with $\calN_{src} = P^\ast = \{ p^\ast \mid p \in P \}$ and $\calN_{snk} = P_\ast = \{ p_\ast \mid p \in P \}$ for some set of ports $P$. This assumption is necessary to enable simulation of bidirectional ports in BIP. The reason we explicitly distinguish node types in this semantics is to give direction to dataflow, similar to BIP connectors. Usually such node type distinctions are implicit, but for preciseness we encode them as a partition within the semantics of Reo connectors. 

As in \secn{PA2Arch}, we interpret all connectors as labeled transition systems. Then we define translations between Reo connectors ($\CA$) and BIP connectors ($\BC$), and show that they preserve properties.

\subsection{Interpretation of BIP and Reo}
\label{sec:interpretation2}
An important difference between BIP and Reo involves how they handle data. 
BIP uses bidirectional ports, while Reo treats input and output ports separately.
Since the common semantics should support both approaches, we duplicate every bidirectional port of BIP to obtain two unidirectional ports, compatible with Reo.
The sense of every reference to a bidirectional port in a BIP interaction expression maps that bidirectional port to its intended corresponding unidirectional port.

Let $\LTS$ be the class of all labeled transition systems over an alphabet $(D+1)^{2P}$, where $D$ is a set of data items; $1 = \{0\}$ contains \emph{void} or \emph{null}, modeling the absence of data; and $2P$ is the \emph{duplicated (unidirectional) port set} of a set of (bidirectional) ports $P$, that is, $2P = \{p^\ast, p_\ast \mid p \in P\}$. If data appears at $p^\ast$ (i.e., $\delta(p^\ast) \neq 0$ for $\delta \in (D+1)^{2P}$), then we interpret this as input to the connector. If data appears at $p_\ast$, then we interpret this at output from the connector. 

Consider \fig{intpr2}. Classes $\CA$ and $\BC$ consist of constraint automata with polarity and interaction models. Morphisms $\BIPb$ and $\Reob$ are translations of those classes and $\fb$ and $\gb$ are interpretations in a common $\LTS$ semantics.
We do not intend to redefine the semantics of constraint automata with polarity and of interaction models in this section. Hence, we interpret them using their definitions from \cite{BSAR06, BBJS14}. 

We begin by defining the interpretation of stateless constraint automata with polarity. 
Given a stateless constraint automaton with polarity $\calA$, we first determine the smallest set of bidirectional ports $P$ such that $\calN_{src}^{used} \subseteq P^\ast$ and $\calN_{snk}^{used} \subseteq P_\ast$, where $\calN_{src}^{used}$ and $\calN_{snk}^{used}$ are all source and sink nodes that occur on a transition of $\calA$. Then, we take $2P$ as the port names of $\fb(\calA)$. Finally, we obtain the transitions of $\fb(\calA)$ by replacing every transition labeled with $N,g$ in $\calA$ with a set of transitions labeled with $\delta \in \Delta(N,g)$, where $\Delta(N,g)$ contains all data assignments $\delta : 2P \to \calD + 1$ that satisfy the data constraint $N,g$.
We formalize this as follows. Let $\calA = (\{q\}, \calN_{src},\calN_{mix},\calN_{snk}, \rightarrow, q)$ be a stateless constraint automaton with polarity over a data domain $\calD$. 
Define $\calN_{src}^{used} = \bigcup \{ N \cap \calN_{src} \mid q \xrightarrow{N,g} q \}$, and $\calN_{snk}^{used} = \bigcup \{ N \cap \calN_{snk} \mid q \xrightarrow{N,g} q \}$.
Let $P$ be the smallest set, with $\calN_{src}^{used} \subseteq P^\ast$ and $\calN_{snk}^{used} \subseteq P_\ast$. Define 
\begin{equation} 
\fb(\calA) = (\{q\},(\calD+1)^{2P}, \{(q,\delta,q) \mid q \xrightarrow{N,g} q, \delta \in \Delta(N,g) \}),
\end{equation}
where $\Delta(N,g) = \{ \delta : 2P \to \calD +1 \mid \delta(2P \setminus N) = \{0\}, \delta \models g \}$.
Note that ports in $\calN_{src} \setminus \calN_{src}^{used}$ and $\calN_{snk} \setminus \calN_{snk}^{used}$ are important only for composition, which we do not consider in this paper.

Next, we interpret interaction models $\Gamma$ by a single-state labeled transition system with labels describing all possible dataflows allowed by the guard, and up and down functions of some interaction expression in $\Gamma$.
Before we provide a formal definition, we first introduce some notation. 
For every BIP interaction expression $\alpha$, we write $P_\alpha$ for its bottom ports, $g_\alpha$ for its guard, $up_w^\alpha$ and $up_L^\alpha$ for the restriction of the up function to its top port and its local variables, respectively, and $dn_{bot}^\alpha$ for the restriction of the down function to its bottom ports.
For every BIP interaction model $\Gamma$, we write $P_\Gamma = \bigcup_{\alpha \in \Gamma} P_\alpha$, and $D_\Gamma = \bigcup_{p \in P_\Gamma} \domD_p$, where $\domD_p$ is the data type of port $p$.
For every data assignment $\delta : 2P_\Gamma \to D_\Gamma +1$ we define $\delta_{up}(p) = \delta(p^\ast)$ and $\delta_{dn}(p) = \delta(p_\ast)$, for all $p \in P_\alpha$.
Then, we define
\begin{equation}
\label{eqn:g_data}
\gb(\Gamma) = (\{q\},(D_\Gamma+1)^{2P_\Gamma},\{ (q, \delta,q) \mid \alpha \in \Gamma, \delta \in \Delta(\alpha) \subseteq (D_\Gamma +1)^{2P_\Gamma} \}),
\end{equation}
where $\Delta(\alpha) =\{ \delta \mid \delta(2P_\Gamma \setminus 2P_\alpha) = \{0\}, g_\alpha(\delta_{up}) = \mathtt{tt}, \delta_{dn} = dn_{bot}^\alpha(up_w^{\alpha}(\delta_{up}),up_L^{\alpha}(\delta_{up})) \}$.
Note that we use the value of $up_{w}^\alpha(\delta_{up})$ as a local variable, since we consider only non-hierarchical interaction models.

In \cite{BBJS14}, Bliudze et al. encode BIP interaction models in {\em Top/Bottom components}, i.e., an automaton over interaction expressions together with local variables. Furthermore, they define a semantics for T/B components, which indirectly defines an interpretation of interaction models. Equation (\ref{eqn:g_data}) imitates this interpretation without using Top/Bottom components explicitly.

Now that we defined the interpretation of our objects in $\LTS$, we explore how these translations preserve properties that are expressible in $\LTS$, as we did for their counterparts in \secn{interpr_no_data}.

\subsection{Reo to BIP}
\label{sec:reotobip_nodata}
Since BIP interaction models are stateless, we cannot translate an arbitrary constraint automaton (i.e., Reo connector) into BIP. Interaction models in BIP preclude keeping track of the state of a Reo connector. Hence, the translation of the interaction model of a BIP architecture into a port automaton in \secn{bip2reo_a} inspires us for our translation $\BIPb$.

Let $\calA$ be a stateless constraint automaton over a data domain $\calD$. Since we care only about external behaviour, we first hide all mixed nodes. 
Then, we transform every transition in $\calA$ with label $N,g$ into a simple BIP connector with $N$ as its bottom ports, together with a guard, an up and a down function that mimic the data constraint $g$. We define the corresponding set $\BIPb(\calA)$ of simple BIP connectors by the set of all transformed transitions from $\calA$.

We first define the transformation of transitions into interaction expressions. For every label $N,g$ in automaton $\calA$, we define the simple BIP connector
\[ \alpha(N,g) = (\{w_{N,g}\} \leftarrow P_N).[ g_{src}(X_{src}) \, : \, Y_{snk} := \mathrm{solve}(g,X_{src}) \, // \, X_{snk} := Y_{snk} ],\] 
where $P_N$ is the smallest set satisfying $N \cap \calN_{bnd} \subseteq 2P_N$, $g_{src}$ is any quantifier free formula equivalent to $\exists N \setminus \calN_{src} : g$, the variables $Y_{snk} = \{y_p \mid p \in N \cap \calN_{snk}\}$ are some fresh local variables, and $X_{src} = \{x_p \mid p \in N \cap \calN_{src}\}$ and $X_{snk} = \{x_p \mid p \in N \cap \calN_{snk}\}$ model the input and output values assigned to the bottom ports, and $\mathrm{solve}(g,X_{src})$ returns a vector $Y_{snk}$ satisfying $\exists X_{mix} : g(X_{src},X_{mix},Y_{snk})$. All variables have data type $\calD$ (the data domain of $\calA$), i.e., $x_p \oftype\calD$ for all $p \in \calN$.
Note that the solve function in $\alpha(N,g)$ is not deterministic. However, comparing the solve function to the random function in Figure 4 in \cite{BBJS14}, we see that this generality is justified.
Now, we define $\BIPb$ as follows:
\begin{equation} 
\label{eqn:bip(A)} 
\BIPb(\calA) = \{ \alpha(N,g) \mid (q,N,g,q) \in \ \to \}, 
\end{equation}

\subsection{BIP to Reo}
The correspondence between BIP interaction expressions and automata transitions from \secn{reotobip_nodata}, provides the main idea for the translation of interaction models into stateless constraint automata. If $\Gamma$ is a set of simple BIP connectors, we assign to every $\alpha \in \Gamma$ a transition $\tau_\alpha$ labeled with $N(\alpha),g(\alpha)$, and subsequently construct the stateless constraint automaton consisting of all such $\tau_\alpha$ transitions.

Let $\alpha$ be a simple BIP interaction expression. 
Recall our relaxation on the data constraint language in \secn{overview}, and our notations regarding $\alpha$ in \secn{interpretation2}.
Then, define $N(\alpha) \subseteq 2P_\alpha = \{ p_\ast, p^\ast \mid p \in P_\alpha\}$
where $p^\ast \in N(\alpha)$ iff $\alpha$ assignes data to $p$ in the upward data transfer, and $p_\ast \in N(\alpha)$ iff $\alpha$ assigns data to $p$ in the downward data transfer.
Furthermore, let $D_\ast = (d_{p_\ast})_{p \in P}$, $D^\ast = (d_{p^\ast})_{p \in P}$, and define
\[ \textstyle g(\alpha) \, = \, \bigwedge_{p \in P} d_{p^\ast},d_{p_\ast} \in \domD_p \, \wedge \, g_\alpha(D^\ast) \, \wedge \, D_\ast = dn_{bot}^\alpha(up_w^\alpha(D^\ast), up_L^\alpha(D^\ast)),\] 
Note that $g(\alpha)$ is independent of the top port $w$, as we consider only non-hierarchical connectors.

Let $\Gamma$ be a set of simple BIP connectors. Recall that $P_\Gamma = \bigcup_\alpha P_\alpha$ and $D_\Gamma = \bigcup_{p \in P_\Gamma} \domD_p$. Then, define the constraint automaton $\Reob(\Gamma)$ over $D_\Gamma$ by 
\begin{equation}
\label{eqn:reo(gamma)}
\Reob(\Gamma) = (\{q\}, (P_\Gamma)^\ast, \emptyset, (P_\Gamma)_\ast, \{ (q,N(\alpha),g(\alpha),q) \mid \alpha \in \Gamma \}, q).
\end{equation}

\begin{example}
\label{ex:bip2reodata}
Consider the interaction expression $\alpha_{\max}$ from \ex{maximum}, with the data domains restricted to $\calD = \{0,\ldots,2^{32}-1\}$. We translate the interaction model $\Gamma = \{\alpha_{\max}\}$ using \eq{reo(gamma)}, i.e., we compute $\calA = \Reob(\Gamma)$. Trivially, $\calA$ is stateless. Its set of input ports equals $(P_\Gamma)^\ast = \{a^\ast,b^\ast\}$, and its set of output ports equals $(P_\Gamma)_\ast = \{a_\ast,b_\ast\}$. It has a unique transitions $(q,N,g,q)$, with synchronization contraint $N=\{a^\ast,b^\ast,a_\ast,b_\ast\}$ and guard
$g \ \equiv \ \bigvee_{x,y,z \in \calD \ : \ z = \max(x,y)} (d_{a^\ast} = x \wedge d_{b^\ast} = y \wedge d_{a_\ast} = z \wedge d_{b_\ast} = z)$.
\tri\end{example}

\subsection{Preservation of properties}

To show the faithfulness of translations $\BIPb$ and $\Reob$, we show that interpretations $\fb$ and $\gb$ commute with the translations $\BIPb$ and $\Reob$ in \fig{intpr2}.

\begin{theorem} 
\label{thm:commute_b}
For all $\calA \in \CA$ and all $\Gamma \in \BC$ we have $\gb(\BIPb(\calA)) = \fb(\calA)$ and $\fb(\Reob(\Gamma)) = \gb(\Gamma)$.
\end{theorem}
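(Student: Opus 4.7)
Both equations claim equality (not merely bisimilarity) of single‑state labeled transition systems, so my plan is to unfold the definitions and verify, in each case, that (i) the alphabets coincide and (ii) the self‑loop transition sets coincide. Concretely, both sides of each equation have the singleton state $\{q\}$, so the whole argument reduces to matching port sets and matching the families $\Delta(\cdot)$ of data assignments that label the loops.

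\medskip\noindent\textbf{First equation, $\gb(\BIPb(\calA)) = \fb(\calA)$.} Fix $\calA = (\{q\},\calN_{src},\calN_{mix},\calN_{snk},\to,q)$. I would first observe that the port set used by $\fb(\calA)$ is the minimal $P$ with $\calN_{src}^{used} \subseteq P^\ast$ and $\calN_{snk}^{used} \subseteq P_\ast$, and that by the very construction of $\alpha(N,g)$ each $P_{\alpha(N,g)}$ is the minimal $P_N$ with $N \cap \calN_{bnd} \subseteq 2P_N$; taking unions, $P_\Gamma$ for $\Gamma = \BIPb(\calA)$ equals $P$. Since in $\alpha(N,g)$ all variables have type $\calD$, we get $D_\Gamma = \calD$, so the two alphabets $(\calD+1)^{2P}$ coincide. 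The heart of the argument is then, for each transition $(q,N,g,q)$, the equality $\Delta(N,g) = \Delta(\alpha(N,g))$. Unfolding the BIP side: $g_{\alpha(N,g)}(\delta_{up}) = \mathtt{tt}$ means $\delta_{up}$ can be extended by values on $N\setminus\calN_{src}$ satisfying $g$; then $\mathrm{solve}$ non‑deterministically chooses a witness $Y_{snk}$ with $\exists X_{mix}\,.\,g(\delta_{up},X_{mix},Y_{snk})$, and $\delta_{dn}$ is forced to equal $Y_{snk}$ on $N \cap \calN_{snk}$ and to $0$ elsewhere. This is exactly the characterization of $\delta \in \Delta(N,g)$ modulo the standard existential projection of $g$ over the mixed nodes, which is how the relaxed data‑constraint satisfaction $\delta \models g$ is interpreted in \secn{interpretation2}. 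Both inclusions then follow.

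\medskip\noindent\textbf{Second equation, $\fb(\Reob(\Gamma)) = \gb(\Gamma)$.} Here I would fix $\alpha \in \Gamma$ and verify $\Delta(N(\alpha),g(\alpha)) = \Delta(\alpha)$. By construction $N(\alpha) \subseteq 2P_\alpha$ records exactly which $p^\ast$ receive a value from the upward transfer and which $p_\ast$ receive a value from the downward transfer, and $(P_\Gamma)^\ast,(P_\Gamma)_\ast$ give the source/sink polarity used by $\fb$. The alphabets $(\calD+1)^{2P_\Gamma}$ therefore match on both sides. For the transition sets, the constraint
\[
g(\alpha) \;\equiv\; \bigwedge_{p\in P}(d_{p^\ast},d_{p_\ast}\in\domD_p)\,\wedge\, g_\alpha(D^\ast)\,\wedge\, D_\ast = dn_{bot}^\alpha\bigl(up_w^\alpha(D^\ast),up_L^\alpha(D^\ast)\bigr)
\]
was designed precisely to encode the BIP firing condition: the first conjunct enforces that only ports in $N(\alpha)$ carry non‑$0$ data (matching the $\delta(2P_\Gamma \setminus 2P_\alpha) = \{0\}$ clause); the second conjunct is the BIP guard; and the third is the deterministic composition of up and down transfers. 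Hence $\delta \models g(\alpha)$ iff the three clauses defining $\Delta(\alpha)$ all hold, giving the desired equality transition‑by‑transition.

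\medskip\noindent\textbf{Main obstacle.} The only subtlety I expect is the asymmetric treatment of mixed nodes. On the Reo side they are part of the data constraint and are handled by existential projection in the satisfaction relation; on the BIP side they are hidden inside the non‑deterministic $\mathrm{solve}$ and inside the quantifier‑free rewriting $g_{src} \equiv \exists N\setminus\calN_{src}\,.\,g$. Making the two match requires being explicit about what $\delta \models g$ means when $\delta$ has no component for mixed nodes, and using that $\mathrm{solve}$ is deliberately non‑deterministic (as the paper notes just after \eq{bip(A)}) to range over all admissible $Y_{snk}$. Once this convention is pinned down, both directions of both set equalities become routine, and the theorem follows.
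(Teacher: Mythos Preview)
Your proposal is correct and follows exactly the paper's approach: the paper's (sketch) proof reduces both equalities to the two key identities $\Delta(\alpha(N,g)) = \Delta(N,g)$ and $\Delta(N(\alpha),g(\alpha)) = \Delta(\alpha)$ and then appeals to the definitions of $\fb$ and $\gb$, which is precisely your plan. Your additional remarks on matching the alphabets and on the handling of mixed nodes via the non-deterministic $\mathrm{solve}$ simply flesh out what the paper leaves implicit in its sketch.
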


\begin{proof}[Proof. (Sketch)]
Let $\Gamma \in \BC$ and $\calA \in \CA$. Then, $\Delta(\alpha(N,g)) = \Delta(N,g)$, and $\Delta(N(\alpha),g(\alpha)) = \Delta(\alpha)$, for all $\alpha \in \Gamma$, and all transition labels $N,g$ in $\calA$. From this and the definitions of $\fb$ and $\gb$, we see that $\gb(\BIPb(\calA))) = \fb(\calA)$, and $\fb(\Reob(\Gamma)) = \gb(\Gamma)$, respectively.
\end{proof}

\begin{corollary}
\label{cor:props_ca2setssimplecon}
   The translations $\BIPb$ and $\Reob$ preserve all properties expressible in $\LTS$, i.e., $\fb(\calA) \in P \Leftrightarrow  \gb(\BIPb(\calA)) \in P$ and $\gb(\Gamma) \in P \Leftrightarrow  \fb(\Reob(\Gamma)) \in P$ for all $P \subseteq \LTS$, $\calA \in \CA$ and $\Gamma \in \BC$.
\end{corollary}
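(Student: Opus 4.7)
The plan is to obtain \cor{props_ca2setssimplecon} as an immediate consequence of \thm{commute_b}. That theorem asserts the strict equalities $\gb(\BIPb(\calA)) = \fb(\calA)$ and $\fb(\Reob(\Gamma)) = \gb(\Gamma)$ in $\LTS$, not merely bisimulation. Once this on-the-nose commutativity is in hand, preservation of any property becomes a tautology: two literally equal labeled transition systems either both belong to a given $P \subseteq \LTS$ or both do not.

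Concretely, I would fix an arbitrary $P \subseteq \LTS$ and argue each equivalence separately. For the first, take any $\calA \in \CA$; by \thm{commute_b} the objects $\fb(\calA)$ and $\gb(\BIPb(\calA))$ coincide, so $\fb(\calA) \in P \Leftrightarrow \gb(\BIPb(\calA)) \in P$. For the second, take any $\Gamma \in \BC$; again by \thm{commute_b} we have $\gb(\Gamma) = \fb(\Reob(\Gamma))$, whence $\gb(\Gamma) \in P \Leftrightarrow \fb(\Reob(\Gamma)) \in P$. No structural induction, no bisimulation closure hypothesis, no additional side conditions on $\calA$ or $\Gamma$ are required.

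The only real obstacle has therefore been discharged inside \thm{commute_b}: the equalities $\Delta(\alpha(N,g)) = \Delta(N,g)$ and $\Delta(N(\alpha), g(\alpha)) = \Delta(\alpha)$ must be unfolded from the definitions of $\BIPb$, $\Reob$, $\fb$, and $\gb$. Given these, the corollary itself is a one-line consequence. It is worth noting that this yields a strictly stronger preservation statement than \cor{props_pa2arch} in the data-agnostic case: there we only preserved bisimulation-closed properties, because the translations added auxiliary state (the dummy component in $\ga$ and the interaction-model state in $\Reoa$), whereas the data-sensitive translations $\BIPb$ and $\Reob$ of \secn{CA2BIPconn} are between stateless objects and introduce no spurious states, so all $\LTS$-properties are preserved verbatim.
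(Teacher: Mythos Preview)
Your proposal is correct and matches the paper's approach: the paper states the corollary immediately after \thm{commute_b} with no separate proof, treating it as the obvious consequence of the literal equalities $\gb(\BIPb(\calA)) = \fb(\calA)$ and $\fb(\Reob(\Gamma)) = \gb(\Gamma)$, exactly as you argue. Your observation that this yields preservation of \emph{all} $\LTS$-properties (not just bisimulation-closed ones, as in \cor{props_pa2arch}) is also accurate and reflects precisely why \thm{commute_b} is stated as an equality rather than a bisimulation.
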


\begin{example}
Consider the following safety property $\varphi$ for the interaction expression $\alpha_{\max}$ from \ex{maximum}: ``the value retrieved from port $a$ equals zero''.  Clearly, this safety property does not hold, whenever $a$ or $b$ offers a non-zero integer. 
Note that $\varphi$ depends solely on the interpretation of the interaction model $\Gamma = \{\alpha_{\max}\}$ in $\LTS$, and hence $\varphi$ is expressible in $\LTS$.
Using \cor{props_ca2setssimplecon} we conclude that $\varphi$ is false also for $\calA_{\max} = \Reob(\{\alpha_{\max}\})$.
Thus, we know any executable code generated from the constraint automaton $\calA_{\max}$ does not satisfy $\varphi$. More generally, \cor{props_ca2setssimplecon} allows us to use the Reo compiler to generate correct code for a BIP interaction model.
\tri\end{example}

\section{Conclusions and Future Work}
\label{sec:conclusion}

BIP and Reo find common ground in their stimulation of exogenous system design. 
This means that they force the explicit modeling of coordination constraints. 
A clear and formal separation between coordination (connectors) and computation (components) allows the software architect to analyze the interaction of the components using automated tools. 
The exogenous approach of BIP and Reo contrasts with the endogenous approach supported in process algebra and other languages where coordination is woven into the code of the components. 
For example, process algebra does not supply constructs to enforce the separation of concerns necessary in exogenous coordination \cite{PA01}.

Multiparty synchronization constitutes a fundamental coordination concept in BIP (represented by interactions in a BIP interaction model) and Reo (represented by synchronization constraints in constraint automata). Our translations show that these representations of multiparty synchronization coincide.   

The BIP framework concretely \emph{defines} what separates computation (BIP behaviour) and coordination (BIP interaction), while Reo merely \emph{separates} computation (Reo components) and coordination (Reo connector) structurally. Indeed, Reo does not force a fixed universal definition for computation and coordination in all applications.  Without giving a fixed definition of separation criterion, Reo's structural separation of computation from coordination (i.e., component versus connector) simply means that, while this separation is always important, the distinction between the two is in the eye of the beholder: in different applications, different, or even the same people, may find it convenient to draw the line that separates computation and coordination at different places to suit their needs.  For example, the stateful behavior of a {\tt FIFO} with capacity of 1 strictly places what this entity does in the behaviour layer of BIP, as a (computation) component. In Reo, such stateful components can, of course, be regarded and used as computation as well. However, when deemed appropriate, one can use the same component (i.e., a {\tt FIFO1} channel) in the construction of a Reo connector as well, e.g., to express the stateful, turn-taking interaction between two components, as in \fig{reomutex}.

Our data-agnostic translations allow compositional translation, because their operators distribute over composition modulo semantic equivalence.
On the other hand, our data-sensitive translation scheme does not support incremental translation.
It seems intuitive to translate synchronous Reo channels into BIP interaction expressions. 
However, the directionality inherent in the dataflows of BIP interaction expressions implies that they can compose only hierarchically, whereas the \emph{relational} specification of dataflow constraints in Reo (which manifests itself as data constraints in constraint automata transition labels) allows more expressive composition of dataflows as relational composition of constraints.
This difference restricts the set of the Reo connectors that this scheme can incrementally translate into BIP, as well as the granularity of the sub-connectors that it can translate in one increment: the data constraints on the boundary nodes of every such sub-connector must be locally resolvable into a directional dataflow expression at the level of the sub-connector, in isolation.
In practice, synchronous cycles in a Reo connector must translate as a whole, which scuttles the computational benefit of translating incrementally.

In contrast with the BIP architecture model, the data-sensitive model for BIP does not include coordinating components within the connector \cite{ABBJS14, BBJS14}.
Nevertheless, it seems possible to use the formalization in \cite{BBJS14} to extend BIP architectures of \cite{ABBJS14} with data.
However, extending the current composition operator $\oplus$ to compose data-sensitive BIP architectures does not seem trivial, and we do not know what properties such an extended composition operator can preserve. 

Using the ideas from \secn{PA2Arch}, extending our $\Reob$ translation (\fig{intpr2}) to the domain of postulated data-sensitive BIP architectures seems straight-forward. 
Moreover, it may be possible to extend our translations to mappings that preserve internal ports.  
Such extensions, together with the results from \secn{CA2BIPconn}, effectively promise a property-preserving composition operator for data-sensitive BIP architectures that may also share internal ports.

\bibliography{references}
\bibliographystyle{eptcs}

\end{document}